\newcommand{\GF}[1]{{\mathbb F}_{#1}}
\begin{document}
\title{Solutions of $x^{q^k}+\cdots+x^{q}+x=a$ in $\GF{2^n}$}
\author{Kwang Ho Kim\inst{1,2} \and  Jong Hyok Choe\inst{1} \and Dok Nam Lee\inst{1}  \and Dae Song Go\inst{3} \and  Sihem Mesnager\inst{4} } \institute{ Institute of Mathematics,
State Academy of Sciences, Pyongyang, Democratic People's Republic
of Korea \email{khk.cryptech@gmail.com} \and PGItech Corp.,
Pyongyang, Democratic People's Republic of Korea \and Master School,
University of Natural Science, Pyongyang, Democratic People's
Republic of Korea\and LAGA, Department of Mathematics, University of
Paris VIII and Paris XIII, CNRS and Telecom ParisTech, France
\email{smesnager@univ-paris8.fr}}%

\maketitle

\begin{abstract}
Though it is well known that the roots of any affine polynomial over
a finite field can be computed by a system of linear equations by
using a normal base of the field, such solving approach appears to
be difficult to apply when the field is fairly large. Thus, it may
be of great interest to find an explicit representation of the
solutions independently of the field base. This was previously done
only for quadratic equations over binary finite field. This paper
gives an explicit representation of solutions for a much wider class of
affine polynomials over a binary prime field.

\noindent\textbf{Keywords:} Linear equation $\cdot$ Binary finite
field $\cdot$ Base of field $\cdot$ Zeros of polynomials $\cdot$
Irreducible polynomials.
\end{abstract}

\section{Introduction}

Define
\[T_l^k(x):=x+x^{2^l}+\cdots+x^{2^{l(k/l-2)}}+x^{2^{l(k/l-1)}}\] when
$l|k$, and in particular
\[T_k(x):=T_1^k(x)=x+x^2+\cdots+x^{2^{k-2}}+x^{2^{k-1}}.\] The degree
of the polynomial $T_l^k(x)$ is $2^{k-l}$.

This paper gives the explicit representations of all
$\overline{\GF{2}}-$ and $\GF{2^n}-$solutions to the affine equation
\[ T_l^{k}(x)=a, a\in \GF{2^n}.\] Obviously, this equation has no
multiple roots since $(T_l^k)'=1\neq 0$. Throughout this paper, we
set $d=\gcd(n,k)$.

To the best of our knowledge, following is the only previous result
in this direction.
\begin{lemma}\label{quadratic_old}(Page 26 of \cite{BSS1999}, 11.1.120 of \cite{MP2013})
The quadratic equation
\[
x^2+x+a=0, a\in \GF{2^n}
\]
has solutions in $\GF{2^n}$ if and only if
\[
T_n(a)=0.
\]
Let us assume $T_n(a)=0$. Let $\delta$ be an element in $\GF{2^n}$
such that $T_n(\delta)=1$ (if $n$ is odd, then one can take
$\delta=1$). Then,
\[x_0=\sum_{i=0}^{n-2}(\sum_{j=i+1}^{n-1}\delta^{2^j})a^{2^i}\] is a
solution to the equation.
\end{lemma}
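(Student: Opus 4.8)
The plan is to treat the two assertions separately. For the solvability criterion, I would use that the Artin--Schreier-type map $L\colon \GF{2^n}\to\GF{2^n}$, $L(x)=x^2+x$, is $\F$-linear with kernel $\{0,1\}$, so its image is an $\F$-subspace of dimension $n-1$. Since $T_n(x^2+x)=T_n(x^2)+T_n(x)=0$ for every $x$ (because squaring permutes the conjugates, $T_n(x^2)=T_n(x)$), the image of $L$ is contained in $\ker T_n$, which also has dimension $n-1$; comparing dimensions forces $\operatorname{Im}L=\ker T_n$. Hence $x^2+x+a=0$ has a root in $\GF{2^n}$ if and only if $T_n(a)=0$. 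The existence of $\delta$ with $T_n(\delta)=1$ is just surjectivity of the trace, and for odd $n$ one checks $T_n(1)=n\cdot 1=1$ in $\F$.

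For the explicit formula I would substitute directly. Write $x_0=\sum_{i=0}^{n-2}c_i a^{2^i}$ with $c_i=\sum_{j=i+1}^{n-1}\delta^{2^j}$, and compute $x_0^2+x_0$ after shifting the index in $x_0^2=\sum_{i=1}^{n-1}c_{i-1}^2 a^{2^i}$ (using $\delta^{2^n}=\delta$ and $a^{2^n}=a$ in $\GF{2^n}$). The coefficient of $a^{2^i}$ in $x_0^2+x_0$ is $c_0$ for $i=0$, it is $c_{i-1}^2+c_i$ for $1\le i\le n-2$, and it is $c_{n-2}^2$ for $i=n-1$. The crucial observation is the telescoping identity $c_{i-1}^2=\sum_{j=i+1}^{n}\delta^{2^j}=c_i+\delta$, which makes every middle coefficient collapse to $\delta$; likewise $c_{n-2}^2=\delta^{2^n}=\delta$, and $c_0=T_n(\delta)-\delta=1+\delta$ in characteristic $2$. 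Therefore $x_0^2+x_0=(1+\delta)a+\delta\sum_{i=1}^{n-1}a^{2^i}=a+\delta\,T_n(a)$, and the hypothesis $T_n(a)=0$ gives $x_0^2+x_0=a$, i.e. $x_0$ is a root of $x^2+x+a=0$.

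The main (indeed only) delicate point is the bookkeeping of the index shift and the two boundary coefficients ($i=0$ and $i=n-1$); once the identity $c_{i-1}^2=c_i+\delta$ is noticed, everything reduces to the single closed-form relation $x_0^2+x_0=a+\delta\,T_n(a)$, after which the criterion $T_n(a)=0$ finishes the argument. No deeper structural input is needed for this lemma, though the same telescoping phenomenon is presumably what will have to be generalized to handle $T_l^k$ in the sequel.
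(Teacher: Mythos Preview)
Your argument is correct in both parts: the dimension count for the Artin--Schreier map gives the solvability criterion cleanly, and the telescoping computation $c_{i-1}^2=c_i+\delta$ is exactly right, yielding the closed form $x_0^2+x_0=a+\delta\,T_n(a)$.

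However, note that the paper does \emph{not} supply its own proof of this lemma: it is quoted verbatim from the references \cite{BSS1999,MP2013} as the only prior result in this direction, and serves as motivation rather than as something the paper rederives. The paper's own contribution for the quadratic case appears later as Corollary~\ref{quadratic_new}, where a different explicit root is obtained, namely $x_0=T_n\bigl(\tfrac{a}{\xi+1}\bigr)$ for any $\xi\in\mu_{2^n+1}\setminus\{1\}$, as a specialization of the general machinery built in Sections~2--5. So your proof is a self-contained verification of the classical formula, whereas the paper's route replaces the auxiliary trace-one element $\delta\in\GF{2^n}$ by an element $\xi$ of the norm-one subgroup of $\GF{2^{2n}}^\times$; the latter is what generalizes uniformly to $T_l^k$.
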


\section{Some useful facts}

\begin{lemma}\label{lem_properties} For any positive integers $k, k', l, l'$ such that $s|l|k$ and $l'|k'$, followings
hold.
\begin{enumerate}
\item (Commutativity)
\[
T_l^k\circ T_{l'}^{k'}=T_{l'}^{k'}\circ T_l^k.
\]
\item (Transitivity)
\[
T_l^k\circ T_s^l=T_s^k.
\]
\item
\[
T_k\circ T_2(x)=T_k^{2k}(x)=x+x^{2^k}.
\]
\item
\[
T_k\circ T_k\circ T_2 =T_{2k}.
\]
\end{enumerate}
\end{lemma}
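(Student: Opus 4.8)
The plan is to translate everything into the polynomial ring $\F[Y]$, where composition of linearized polynomials becomes ordinary multiplication. To a linearized polynomial $L(x)=\sum_i a_i x^{2^i}$ with coefficients $a_i\in\F$ I associate $\phi(L):=\sum_i a_i Y^i\in\F[Y]$. The map $\phi$ is visibly additive and injective, since the monomials $x^{2^i}$ are linearly independent in $\overline{\GF 2}[x]$. The crucial point is that for $L_1=\sum_i a_i x^{2^i}$ and $L_2=\sum_j b_j x^{2^j}$ one has $L_1\bigl(L_2(x)\bigr)=\sum_{i,j} a_i b_j^{2^i} x^{2^{i+j}}=\sum_{i,j} a_i b_j x^{2^{i+j}}$, the last step because each $b_j\in\F$ is fixed by Frobenius; hence $\phi(L_1\circ L_2)=\phi(L_1)\,\phi(L_2)$. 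Thus $\phi$ is an isomorphism from the ring of linearized polynomials over $\F$ (under $+$ and $\circ$) onto the commutative ring $\F[Y]$, and each of the four claimed identities is equivalent to the corresponding identity in $\F[Y]$.

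I then record the images of the relevant polynomials. Summing a finite geometric series over $\F$ gives, for $l\mid k$, $\phi(T_l^k)=\sum_{i=0}^{k/l-1}Y^{li}=\frac{Y^k+1}{Y^l+1}$, which is a genuine polynomial because $Y^l+1$ divides $Y^k+1$; in particular $\phi(T_k)=\frac{Y^k+1}{Y+1}=1+Y+\cdots+Y^{k-1}$ and $\phi(T_2)=\frac{Y^2+1}{Y+1}=Y+1$.

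With this dictionary the four statements are one-line computations, after which one applies $\phi^{-1}$. For (1), $\phi(T_l^k\circ T_{l'}^{k'})=\frac{Y^k+1}{Y^l+1}\cdot\frac{Y^{k'}+1}{Y^{l'}+1}$ is unchanged when the two factors are swapped, since $\F[Y]$ is commutative. For (2), $\phi(T_l^k\circ T_s^l)=\frac{Y^k+1}{Y^l+1}\cdot\frac{Y^l+1}{Y^s+1}=\frac{Y^k+1}{Y^s+1}=\phi(T_s^k)$ (concretely, if $A(Y^l+1)=Y^k+1$ and $B(Y^s+1)=Y^l+1$ then $AB(Y^s+1)=Y^k+1$). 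For (3), $\phi(T_k\circ T_2)=\frac{Y^k+1}{Y+1}\cdot(Y+1)=Y^k+1$, while $\phi(T_k^{2k})=\frac{Y^{2k}+1}{Y^k+1}=Y^k+1$ and $\phi(x+x^{2^k})=1+Y^k$, so all three agree. For (4), using (3) and characteristic $2$, $\phi(T_k\circ T_k\circ T_2)=\Bigl(\frac{Y^k+1}{Y+1}\Bigr)^{2}(Y+1)=\frac{(Y^k+1)^2}{Y+1}=\frac{Y^{2k}+1}{Y+1}=\phi(T_{2k})$.

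The only step that is not pure bookkeeping — and the single place where the hypothesis that all coefficients lie in the prime field $\F$ is used — is the identity $\phi(L_1\circ L_2)=\phi(L_1)\phi(L_2)$: for general linearized polynomials, composition is a twisted, non-commutative product, and it collapses to ordinary polynomial multiplication precisely because Frobenius fixes every coefficient. Everything else is manipulation of finite geometric series over $\F$. One could instead prove (1)--(4) directly by re-indexing the double sums $\sum_i\sum_j x^{2^{li+sj}}$ and checking that the exponents run exactly once over the required arithmetic progression $0,s,2s,\dots$, but the ring-theoretic route is shorter and handles all four parts uniformly.
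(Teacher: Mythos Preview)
Your proof is correct. The paper's own proof is the single sentence ``All statements can be easily checked by direct calculation,'' with no details given, so in a sense any complete argument is already more than what the paper supplies.

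That said, your route is genuinely different from a naive direct check: rather than expanding each double sum $\sum_i\sum_j x^{2^{li+l'j}}$ and re-indexing, you pass through the standard ring isomorphism $\phi$ from $\F$-linearized polynomials under composition to $\F[Y]$ under ordinary multiplication. This buys you (1) for free from commutativity of $\F[Y]$, and reduces (2)--(4) to one-line telescoping identities among the factors $(Y^m+1)$. The only nontrivial ingredient --- that $\phi$ turns composition into multiplication precisely because Frobenius fixes coefficients in $\F$ --- you state and justify explicitly. A direct double-sum argument would work equally well and is presumably what the authors had in mind, but your approach is cleaner, handles all four parts uniformly, and makes transparent why no analogous commutativity holds for linearized polynomials over larger coefficient fields.
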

\begin{proof}
All statements can be easily checked by direct calculation. \qed
\end{proof}

\begin{lemma}\label{lem_fielddef} For any positive integers $n$ and $k$, it holds
\[
T_k(x)\in \GF{2^n}\Longleftrightarrow T_n(x)\in \GF{2^k}.
\]
In particular, letting $k=1$, we have
\[
x\in \GF{2^n} \Longleftrightarrow T_n(x)\in \GF{2}
\Longleftrightarrow T_n\circ T_2(x)=0.
\]
\end{lemma}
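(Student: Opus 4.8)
The plan is to rewrite each side of the asserted equivalence as the vanishing of one and the same composition of $T$-operators, so that the statement collapses to the commutativity already recorded in Lemma~\ref{lem_properties}. The one external ingredient is the standard description of subfields: for $y\in\overline{\GF 2}$ and a positive integer $m$ one has $y\in\GF{2^m}$ if and only if $y^{2^m}=y$, i.e. if and only if $y^{2^m}+y=0$.

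First I would note two elementary properties of $T_k$ on $\overline{\GF 2}$: it is additive, and it commutes with the Frobenius $z\mapsto z^2$ and hence with every power map $z\mapsto z^{2^m}$; in particular $T_k(x)^{2^m}=T_k\bigl(x^{2^m}\bigr)$. Next, Lemma~\ref{lem_properties}(3), applied with $k$ replaced by $m$, gives $T_m\circ T_2(x)=x+x^{2^m}$, which I read as the identity $x^{2^m}+x=T_m\bigl(T_2(x)\bigr)$.

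Putting these together yields the chain $T_k(x)\in\GF{2^n}\iff T_k(x)^{2^n}+T_k(x)=0\iff T_k\bigl(x^{2^n}\bigr)+T_k(x)=0\iff T_k\bigl(x^{2^n}+x\bigr)=0\iff T_k\bigl(T_n(T_2(x))\bigr)=0$, i.e. $(T_k\circ T_n\circ T_2)(x)=0$. The identical computation with the roles of $n$ and $k$ interchanged shows $T_n(x)\in\GF{2^k}\iff (T_n\circ T_k\circ T_2)(x)=0$. Since $T_k\circ T_n=T_n\circ T_k$ by Lemma~\ref{lem_properties}(1) --- or, directly, because $T_k\circ T_n(x)$ and $T_n\circ T_k(x)$ are both equal to the double sum $\sum_{0\le i<k,\,0\le j<n}x^{2^{i+j}}$ --- the two right-hand conditions coincide, which proves the equivalence. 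For the displayed special case I would simply take $k=1$: then $T_1$ is the identity and $\GF{2^k}=\GF{2}$, so the chain above becomes $x\in\GF{2^n}\iff T_n(x)\in\GF{2}\iff T_n\circ T_2(x)=0$.

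There is no serious obstacle here: the whole argument is bookkeeping with linearized polynomials. The two points that need a moment's care are (i) justifying $T_k(x)^{2^n}=T_k\bigl(x^{2^n}\bigr)$, i.e. that $T_k$ commutes with the $2^n$-th power map, and (ii) checking that the commutativity clause of Lemma~\ref{lem_properties} really does cover the degenerate instance $l=l'=1$ that is used (equivalently, supplying the one-line symmetric double-sum argument instead of quoting the lemma).
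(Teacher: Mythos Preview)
Your proof is correct and is essentially the same as the paper's: the paper simply asserts the identity $T_k(x)+T_k(x)^{2^n}=T_n(x)+T_n(x)^{2^k}$ ``by direct computation'' and then applies the subfield criterion, which is exactly what your factorisation through $T_k\circ T_n\circ T_2=T_n\circ T_k\circ T_2$ makes explicit. Your version supplies the details behind that direct computation, but the underlying idea is identical.
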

\begin{proof}
Since $T_k(x)+T_k(x)^{2^n}=T_n(x)+T_n(x)^{2^k}$ which is checked by
direct computation, it follows $T_k(x)\in
\GF{2^n}\Longleftrightarrow T_k(x)^{2^n}=T_k(x) \Longleftrightarrow
T_k(x)+T_k(x)^{2^n}=0 \Longleftrightarrow
T_n(x)+T_n(x)^{2^k}=0\Longleftrightarrow T_n(x)\in \GF{2^k}$. \qed

\end{proof}

Following fact, though already well-known, can be reformulated.
\begin{corollary}\label{T_image} Let $l$ be a divisor of $k$. Then,
\[
T_l^k(\GF{2^k})=\GF{2^l}.
\]
\end{corollary}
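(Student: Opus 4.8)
The plan is to recognize $T_l^{k}$, restricted to $\GF{2^k}$, as the relative trace from $\GF{2^k}$ down to its subfield $\GF{2^l}$, and to reprove by hand the two facts that make up the classical statement: that this map takes values in $\GF{2^l}$, and that it is onto $\GF{2^l}$. Accordingly I would split the claim into the inclusion $T_l^{k}(\GF{2^k})\subseteq\GF{2^l}$ and the reverse inclusion.

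For the first inclusion, fix $x\in\GF{2^k}$ and compute $\big(T_l^{k}(x)\big)^{2^l}=x^{2^l}+x^{2^{2l}}+\cdots+x^{2^{l(k/l-1)}}+x^{2^{k}}$. Since $x^{2^{k}}=x$ for $x\in\GF{2^k}$, the right-hand side telescopes back to $T_l^{k}(x)$, so $T_l^{k}(x)$ is fixed by $y\mapsto y^{2^l}$ and hence lies in $\GF{2^l}$. (This could alternatively be extracted from Lemma~\ref{lem_fielddef} after changing the base field from $\F$ to $\GF{2^l}$, but the one-line computation is cleaner.)

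For surjectivity, I would first observe that $T_l^{k}$ is $\GF{2^l}$-linear on $\GF{2^k}$: it is additive since $(x+y)^{2^{li}}=x^{2^{li}}+y^{2^{li}}$ in characteristic $2$, and for $c\in\GF{2^l}$ one has $c^{2^{li}}=c$ for every $i$, whence $T_l^{k}(cx)=c\,T_l^{k}(x)$. Therefore $T_l^{k}(\GF{2^k})$ is a $\GF{2^l}$-subspace of the one-dimensional $\GF{2^l}$-vector space $\GF{2^l}$, so it is either $\{0\}$ or all of $\GF{2^l}$. It cannot be $\{0\}$, since that would force every element of $\GF{2^k}$ to be a root of $T_l^{k}$, impossible because $\deg T_l^{k}=2^{k-l}<2^{k}$, the number of elements of $\GF{2^k}$. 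Hence $T_l^{k}(\GF{2^k})=\GF{2^l}$, which together with the first inclusion gives the corollary.

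I do not expect a genuine obstacle here: the statement is classical. The only points requiring (routine) care are checking the $\GF{2^l}$-linearity cleanly and using the degree count $2^{k-l}<2^{k}$ correctly — in particular one must count dimensions over $\GF{2^l}$, not over $\F$, since a priori the image is only known to be a $\GF{2^l}$-subspace.
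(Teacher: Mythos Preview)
Your proof is correct and self-contained. The paper takes a different, terser route: it invokes the transitivity identity $T_l\circ T_l^{k}=T_k$ from Lemma~\ref{lem_properties} together with the (assumed known) base case $T_k(\GF{2^k})=\GF{2}$, and then reads off the inclusion $T_l^{k}(\GF{2^k})\subseteq\GF{2^l}$ from the characterization $T_l(y)\in\GF{2}\Longleftrightarrow y\in\GF{2^l}$ of Lemma~\ref{lem_fielddef}; the reverse inclusion is left implicit, consistent with the corollary being flagged as ``already well-known.'' Your argument, by contrast, computes $(T_l^{k}(x))^{2^l}=T_l^{k}(x)$ directly and then obtains surjectivity via $\GF{2^l}$-linearity plus the degree bound $2^{k-l}<2^{k}$. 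The paper's approach showcases how the result slots into the $T_l^{k}$-machinery it has built; yours is more elementary and does not rely on the absolute-trace surjectivity $T_k(\GF{2^k})=\GF{2}$ as an external input, instead deriving everything from first principles.
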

\begin{proof}
This follows from the fact that $T_l(
T_l^k(\GF{2^k}))=T_k(\GF{2^k})=\GF{2}$. \qed
\end{proof}

\begin{theorem}\label{gcd_1} Let us assume $\gcd(n,k)=1$. Then it holds
\[
T_k(x)\in \GF{2^n}\Longleftrightarrow x\in \GF{2^n}+\GF{2^k},
\]
where $\GF{2^n}+\GF{2^k}=\{a+b\,|\,a\in\GF{2^n}, b\in\GF{2^k}\}$.
\end{theorem}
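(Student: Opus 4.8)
The plan is to realize both sides as $\GF 2$-subspaces of $\overline{\GF{2}}$, prove the easy inclusion ``$\Leftarrow$'' by hand, and then show that the two subspaces have the same $\GF 2$-dimension; the hypothesis $\gcd(n,k)=1$ enters only in that final comparison.

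\emph{The inclusion $\GF{2^n}+\GF{2^k}\subseteq\{x\in\overline{\GF{2}}:T_k(x)\in\GF{2^n}\}$.} This needs no assumption on $n,k$. Since $T_k$ is $\GF 2$-linear it suffices to treat the two summands separately. For $a\in\GF{2^n}$ one computes $T_k(a)^{2^n}=\sum_{i=0}^{k-1}(a^{2^n})^{2^i}=\sum_{i=0}^{k-1}a^{2^i}=T_k(a)$, so $T_k(a)\in\GF{2^n}$. For $b\in\GF{2^k}$ one has $T_k(b)^2=T_k(b)$ because $b^{2^k}=b$ (equivalently $T_k(b)\in T_k(\GF{2^k})=\GF 2$ by Corollary~\ref{T_image}), so $T_k(b)\in\GF 2\subseteq\GF{2^n}$. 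Adding and using linearity, $T_k(x)\in\GF{2^n}$ for every $x\in\GF{2^n}+\GF{2^k}$.

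\emph{Dimension of the right-hand set.} For $x\in\overline{\GF{2}}$ one has $T_k(x)\in\GF{2^n}$ if and only if $L(x):=T_k(x)^{2^n}+T_k(x)=0$; by the identity established in the proof of Lemma~\ref{lem_fielddef}, this $L$ also equals $T_n(x)+T_n(x)^{2^k}$. Expanding, $L(x)=\sum_{i=0}^{k-1}x^{2^i}+\sum_{i=n}^{n+k-1}x^{2^i}$, and since $n\ge 1$ the index $0$ occurs only in the first sum while the index $n+k-1$ occurs only in the second, so the coefficients of $x^{2^0}$ and of $x^{2^{n+k-1}}$ are both $1$. Hence $L$ is a nonzero $\GF 2$-linearized polynomial of degree $2^{n+k-1}$ with $L'(x)=1$, so it is separable, and its root set $\{x:T_k(x)\in\GF{2^n}\}$ is a $\GF 2$-subspace of $\overline{\GF{2}}$ of dimension exactly $n+k-1$.

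\emph{Conclusion by dimension count.} Because $\gcd(n,k)=1$ we have $\GF{2^n}\cap\GF{2^k}=\GF{2^{\gcd(n,k)}}=\GF 2$, whence $\dim_{\GF 2}(\GF{2^n}+\GF{2^k})=n+k-1$. Thus $\GF{2^n}+\GF{2^k}$ is an $(n+k-1)$-dimensional subspace contained in the $(n+k-1)$-dimensional subspace $\{x:T_k(x)\in\GF{2^n}\}$, so the two coincide. I do not expect a genuine obstacle here: the only delicate points are the bookkeeping of the coefficients of $L$ (to pin down its degree and separability) and the observation that coprimality is exactly what forces $\GF{2^n}\cap\GF{2^k}$ to be one-dimensional — without it the inclusion is strict, the left-hand side having dimension $n+k-\gcd(n,k)$.
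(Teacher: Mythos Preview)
Your proof is correct and follows essentially the same strategy as the paper: establish the inclusion $\Leftarrow$ directly from linearity and the facts $a^{2^n}=a$, $T_k(b)\in\GF 2$, and then deduce equality by a cardinality/dimension count using $\GF{2^n}\cap\GF{2^k}=\GF 2$. The only cosmetic difference is in counting the set $\{x:T_k(x)\in\GF{2^n}\}$: the paper observes that each fiber $T_k(x)=a$ has $2^{k-1}$ elements and multiplies by $|\GF{2^n}|$, whereas you realize the set as the root locus of the separable linearized polynomial $L(x)=T_k(x)^{2^n}+T_k(x)$ of degree $2^{n+k-1}$; both arguments are equivalent.
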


\begin{proof}
$(\Longleftarrow)$\\
Let $x=a+b$ for $a\in\GF{2^n}$ and $b\in\GF{2^k}$. Then $a^{2^n}=a$,
and $T_k(b)\in \GF{2}$ by above proposition, thus
\begin{align*}
T_k(a+b)^{2^n}&=(T_k(a)+T_k(b))^{2^n}\\
&=T_k(a)^{2^n}+T_k(b)^{2^n}\\
&=T_k(a^{2^n})+T_k(b)\\
&=T_k(a)+T_k(b)=T_k(a+b),
\end{align*}
where the linearity of $T_k$ was exploited. That is
$T_k(x)=T_k(a+b)\in\GF{2^n}$.

$(\Longrightarrow)$\\
Since the necessity in the statement has been proved, in order to
prove the sufficiency in the statement, it is enough to show
\[
\#\{x\in \overline{\GF{2}}\,|\,T_k(x)\in
\GF{2^n}\}=\#\{\GF{2^n}+\GF{2^k}\},
\]
where $\overline{\GF{2}}$ is the algebraic closure of $\GF{2}$.
 To begin with, we have $\#\{x\in \overline{\GF{2}}\,|\,T_k(x)\in
\GF{2^n}\}=2^{n+k-1}$ because for every $a\in \GF{2^n}$ the equation
$T_k(x)=a$ has $2^{k-1}$ different solutions.

 On the other hand, it also holds
 $\#\{\GF{2^n}+\GF{2^k}\}=2^{n+k-1}$. In fact, for $a, a'\in
 \GF{2^n}$ and $b, b'\in \GF{2^k}$, it holds $a+b=a'+b'$ $\iff$ $a+a'=b+b'\in
 \GF{2^n}\cap\GF{2^k}=\GF{2}$, i.e., ($a=a'$ and $b=b'$) or  ($a=a'+1$ and
 $b=b'+1$). Therefore $\#\{\GF{2^n}+\GF{2^k}\}=\#\{a+b\,|\,a\in\GF{2^n}, b\in\GF{2^k}\}=(2^n\cdot2^k)/2=2^{n+k-1}$.
 \qed
\end{proof}

Without the condition $\gcd(n,k)=1$, we give:
\begin{theorem} It holds
\[
\{x\in \overline{\GF{2}}\,|\,T_k(x)\in \GF{2^n}\}=\{T_n^{[n,k]}\circ
T_k^{[n,k]}\circ T_2(x) \,|\,x\in \GF{2^{2[n,k]}}\}(\subset
\GF{2^{2[n,k]}}),
\]
where $[n,k]$ is the least common multiple  of two integers $n$ and
$k$.
\end{theorem}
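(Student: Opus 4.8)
The plan is to put $m:=[n,k]$ and $L:=T_n^m\circ T_k^m\circ T_2$, so that the right-hand side of the asserted identity is exactly $L(\GF{2^{2m}})$, and to prove the identity by a cardinality squeeze. Write $S:=\{x\in\overline{\GF{2}}\mid T_k(x)\in\GF{2^n}\}$. I will establish three facts: (a) $L(\GF{2^{2m}})\subseteq S$ and $L(\GF{2^{2m}})\subseteq\GF{2^{2m}}$; (b) $|S|=2^{n+k-1}$; (c) $|L(\GF{2^{2m}})|\ge 2^{n+k-1}$. Since $L(\GF{2^{2m}})$ is finite, (a)+(b)+(c) force $L(\GF{2^{2m}})=S$, and then $S=L(\GF{2^{2m}})\subseteq\GF{2^{2m}}$ delivers the parenthetical inclusion in the statement.

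For (a): each of $T_2,T_k^m,T_n^m$ is a polynomial over $\GF{2}$ built from the Frobenius, hence maps $\GF{2^{2m}}$ into itself, so $L(\GF{2^{2m}})\subseteq\GF{2^{2m}}$. To see $L(x)\in S$, I compute $T_k\circ L$ by Lemma~\ref{lem_properties}: commutativity gives $T_k\circ T_n^m=T_n^m\circ T_k$, and since $1\mid k\mid m$, transitivity gives $T_k\circ T_k^m=T_k^m\circ T_k=T_k^m\circ T_1^k=T_1^m=T_m$; combined with item (3) of the lemma this yields $T_k\circ L=T_n^m\circ T_m\circ T_2$ with $T_m\circ T_2(x)=x+x^{2^m}$. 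Now for $x\in\GF{2^{2m}}$ we have $(x+x^{2^m})^{2^m}=x^{2^m}+x$, so $x+x^{2^m}\in\GF{2^m}$, and Corollary~\ref{T_image} (with the divisor $n$ of $m$) gives $T_k(L(x))=T_n^m(x+x^{2^m})\in T_n^m(\GF{2^m})=\GF{2^n}$, i.e.\ $L(x)\in S$. For (b): as already argued in the proof of Theorem~\ref{gcd_1}, $T_k$ is a separable additive polynomial of degree $2^{k-1}$, so $T_k(x)=a$ has exactly $2^{k-1}$ roots in $\overline{\GF{2}}$ for every $a$; letting $a$ run over $\GF{2^n}$ yields $|S|=2^n\cdot 2^{k-1}$. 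For (c): $L$ is $\GF{2}$-linear, and since $\deg T_l^k=2^{k-l}$ we get $\deg L=2^{m-n}\cdot 2^{m-k}\cdot 2=2^{2m-n-k+1}$; hence the kernel of $L$ on $\GF{2^{2m}}$ has at most $2^{2m-n-k+1}$ elements, so by rank–nullity $|L(\GF{2^{2m}})|=2^{2m}/|\ker L|\ge 2^{n+k-1}$.

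The main delicate point is the composition bookkeeping in (a): one should resist replacing $T_n^m\circ T_k^m$ by $T_{\gcd(n,k)}^m$, since those two maps agree only on $\GF{2^m}$ and not as polynomials over $\overline{\GF{2}}$ (the exponents $ni+kj$ occurring in $T_n^m\circ T_k^m$ reduce to the multiples of $\gcd(n,k)$ modulo $m$ but not as integers). The argument above avoids this pitfall because $T_n^m$ is only ever evaluated at the element $x+x^{2^m}$, which genuinely lies in $\GF{2^m}$. Everything else — the degree computation, the rank–nullity count, and the root count for $T_k$ — is routine.
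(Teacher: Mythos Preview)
Your proof is correct and follows essentially the same strategy as the paper: show the inclusion of the right-hand side in $S$ by composing with $T_k$ and unwinding via Lemma~\ref{lem_properties}, then match cardinalities using the degree of the composite map. The only cosmetic differences are that the paper composes one step further with $T_n$ (landing in $\GF{2}$ via item~(4) and Lemma~\ref{lem_fielddef}) where you stop at $\GF{2^n}$ via Corollary~\ref{T_image}, and that you spell out the cardinality squeeze via rank--nullity where the paper simply asserts the image size.
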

\begin{proof} Let $L=[n,k]$.
Let us set $y=T_n^{L}\circ T_k^{L}\circ T_2(x)$ for $x\in
\GF{2^{2L}}$. To begin with, we will show $T_n\circ T_k(y)\in
\GF{2}$ which is equivalent to $T_k(y)\in \GF{2^n}$ by Lemma
\ref{lem_fielddef}. In fact,
\[
T_n\circ T_k(y)=T_n^{L}\circ T_n\circ T_k^{L}\circ T_k\circ
T_2(x)=T_L\circ T_L \circ T_2(x)=T_{2L}(x)\in \GF{2},
\]
where the equalities are from Lemma \ref{lem_fielddef}, except for
the last equality which is from Lemma \ref{lem_properties}.

Obviously, the cardinality of the left side set is $2^{n+k-1}$. On
the other hand, the cardinality of the right side set is also
$2^{n+k-1}$ as it equals $2^{2L-(L-n)-(L-k)-1}=2^{n+k-1}$ and so the
two sets coincide. \qed
\end{proof}

\begin{example}
By Theorem \ref{gcd_1}, we know that $\{x\in
\overline{\GF{2}}\,|\,T_k(x)\in \GF{2^n}\}\subset \GF{2^{nk}}=
\GF{2^{[n,k]}}$ when $\gcd(n,k)=1$. However, it is not always the
case. Let us consider the case $n=k=2$.
\begin{align*}
&\{x\in \overline{\GF{2}}\,|\,T_2(x)\in \GF{2^2}\}=\{x\in
\overline{\GF{2}}\,|\,x+x^2+x^4+x^8=0\}\\
&=\{x\in \overline{\GF{2}}\,|\,(x+x^2)(1+x+x^2)(1+x+x^4)=0\}.
\end{align*}
The least field that contains this set is
$\GF{2^4}=\GF{2^{2[2,2]}}$.

That is, generally, $\GF{2^{2[n,k]}}$ is the smallest field
including
\[
\{x\in \overline{\GF{2}}\,|\,T_k(x)\in \GF{2^n}\}.
\]
\end{example}

\begin{proposition}\label{gcd_lcm} When $a\in \GF{2^n}$,
\[
T^{[n,k]}_k(a)=T_d^n(a)=T^{[n,n-k]}_{n-k}(a).
\]
\end{proposition}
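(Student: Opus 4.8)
The plan is to show that, when evaluated at $a\in\GF{2^n}$, all three operators collapse to one and the same sum of Frobenius iterates of $a$. Set $d=\gcd(n,k)$ and $m=n/d$. First I would record the purely arithmetic facts $[n,k]/k=n/d=m$ (since $[n,k]=nk/d$) and, because $\gcd(n-k,n)=\gcd(k,n)=d$, also $[n,n-k]/(n-k)=n/d=m$. Unravelling the definition of $T_l^k$ then gives
\[
T_k^{[n,k]}(a)=\sum_{i=0}^{m-1}a^{2^{ki}},\qquad
T_d^n(a)=\sum_{j=0}^{m-1}a^{2^{dj}},\qquad
T_{n-k}^{[n,n-k]}(a)=\sum_{i=0}^{m-1}a^{2^{(n-k)i}},
\]
so each is a sum of exactly $m$ Frobenius iterates. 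The hypothesis $a\in\GF{2^n}$ enters here: it forces $a^{2^n}=a$, hence $a^{2^e}$ depends only on $e\bmod n$, and so each sum is determined by the multiset of its exponents read modulo $n$.

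The heart of the argument is the comparison of these exponent multisets. I would show that $\{ki\bmod n:0\le i<m\}=\{dj:0\le j<m\}$. Since $d\mid k$ and $d\mid n$, dividing by $d$ gives $ki\bmod n=d\bigl((k/d)i\bmod m\bigr)$, and since $\gcd(k/d,m)=\gcd(k/d,n/d)=1$ the map $i\mapsto(k/d)i\bmod m$ is a permutation of $\{0,1,\dots,m-1\}$; therefore the exponents $ki\bmod n$ run through $\{0,d,2d,\dots,(m-1)d\}$, which is precisely the exponent set of $T_d^n(a)$ (no reduction modulo $n$ occurs there, as $dj<dm=n$). Re-indexing the sum — legitimate because $a^{2^{ki}}=a^{2^{ki\bmod n}}$ for $a\in\GF{2^n}$ — yields $T_k^{[n,k]}(a)=T_d^n(a)$, and running the same argument verbatim with $k$ replaced by $n-k$ (using $\gcd(n-k,n)=d$) gives $T_{n-k}^{[n,n-k]}(a)=T_d^n(a)$.

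Equivalently, in operator language, on $\GF{2^n}$ the Frobenius $\sigma\colon x\mapsto x^{2^d}$ has order $m$, $T_d^n$ acts as $1+\sigma+\cdots+\sigma^{m-1}$, and $T_k^{[n,k]}$ acts as $\sum_{i=0}^{m-1}\sigma^{(k/d)i}$; coprimality of $k/d$ with $m$ makes $(k/d)i$ sweep out a complete residue system modulo $m$, so the two operators agree. I do not anticipate a genuine obstacle: the whole proof hinges only on the index bookkeeping — the identities $[n,k]/k=n/d$, $\gcd(k/d,n/d)=1$, and the reduction $ki\bmod n=d\bigl((k/d)i\bmod m\bigr)$ — which is exactly what rewrites the raw definition of $T_k^{[n,k]}$ as the trace-like sum $T_d^n$.
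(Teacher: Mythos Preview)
Your proposal is correct and follows essentially the same approach as the paper: write out the three sums, observe that all three have $m=n/d$ terms, reduce the exponents modulo $n$ using $a^{2^n}=a$, and identify the exponent multisets with $\{0,d,2d,\dots,(m-1)d\}$. The only difference is that you actually carry out the ``easy to check'' set equality via the permutation argument $i\mapsto (k/d)i\bmod m$, which the paper leaves to the reader.
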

\begin{proof}

By definition
$T^{[n,k]}_k(a)=\sum_{i=0}^{\frac{[n,k]}{k}-1}a^{2^{ik}}$,
$T^{n}_d(a)=\sum_{j=0}^{\frac{n}{d}-1}a^{2^{jd}}$ and
$T^{[n,n-k]}_{n-k}(a)=\sum_{i=0}^{\frac{[n,n-k]}{n-k}-1}a^{2^{i(n-k)}}$.
Note that the upper bounds of indices in three summations are
identical: $\frac{[n,k]}{k}=\frac{n}{d}=\frac{[n,n-k]}{n-k}$. It is
easy to check $\{ik \mod n\,|\,0\leq i \leq
\frac{[n,k]}{k}-1\}=\{jd\,|\,0\leq j\leq \frac{n}{d}-1\}=\{i(n-k)
\mod n\,|\,0\leq i \leq \frac{[n,n-k]}{n-k}-1\}$. Since $a^{2^n}=a$,
all three summations are identical. \qed
\end{proof}

\section{Zeros of $T_l^k$}

\begin{lemma}\label{corT_k=0} Followings are
facts.
\begin{enumerate}
\item
\[
\{x\in
\overline{\GF{2}}\,|\,T_k(x)=0\}=T_2(\GF{2^k})=\{x+x^2\,|\,x\in
\GF{2^k}\}.
\]
\item
\[ \{x\in
\GF{2^n}\,|\,T_k(x)=0\}=T_2(\GF{2^k})\cap\GF{2^n}=\{x+x^2\in
\GF{2^n}\,|\,x\in \GF{2^k}\}.
\]
\end{enumerate}
In particular,
\begin{itemize}
\item If $\frac{k}{d}$ is odd, then
\[
\{x\in \GF{2^n}\,|\,T_k(x)=0\}=\{x\in\GF{2^d}\,|\,T_d(x)=0\}.
\]
\item If $\frac{k}{d}$ is even, then
\[
\{x\in \GF{2^n}\,|\,T_k(x)=0\}=\GF{2^d}.
\]

\end{itemize}
\end{lemma}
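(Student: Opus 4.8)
The plan is to prove the two displayed identities first, and then obtain the two ``in particular'' cases as consequences.

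\emph{The first identity.} One inclusion is immediate from Lemma~\ref{lem_properties}(3): if $y=u+u^2$ for some $u\in\GF{2^k}$, then $T_k(y)=(T_k\circ T_2)(u)=u+u^{2^k}=u+u=0$, so $T_2(\GF{2^k})\subseteq\{x\in\overline{\GF{2}}\mid T_k(x)=0\}$. For the reverse inclusion I would argue by cardinality. As already noted, $(T_k)'=1$, so $T_k$ is separable and therefore has exactly $\deg T_k=2^{k-1}$ roots in $\overline{\GF{2}}$. On the other hand $u\mapsto u+u^2$ is an $\GF{2}$-linear endomorphism of $\GF{2^k}$ whose kernel is $\{u\in\GF{2^k}\mid u=u^2\}=\GF{2}$, so its image $T_2(\GF{2^k})$ has exactly $2^{k-1}$ elements. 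A subset of a finite set of the same cardinality as the ambient set must be the whole set, so the inclusion is an equality. The second identity then follows simply by intersecting both sides with $\GF{2^n}$; the final rewriting $\{x+x^2\in\GF{2^n}\mid x\in\GF{2^k}\}$ is only a change of notation.

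\emph{The special cases.} The point extracted from the first identity is that $\ker T_k=T_2(\GF{2^k})\subseteq\GF{2^k}$: every root of $T_k$ already lies in $\GF{2^k}$. Hence $\{x\in\GF{2^n}\mid T_k(x)=0\}=\ker T_k\cap\GF{2^n}\subseteq\GF{2^k}\cap\GF{2^n}=\GF{2^d}$, and so this set equals $\{x\in\GF{2^d}\mid T_k(x)=0\}$. It remains to evaluate $T_k$ on $\GF{2^d}$. For $x\in\GF{2^d}$ the iterated Frobenius values $x,x^2,x^4,\dots,x^{2^{k-1}}$ are periodic with period $d$, and since $d\mid k$ the sum $T_k(x)=\sum_{i=0}^{k-1}x^{2^i}$ consists of $\frac{k}{d}$ copies of $T_d(x)=\sum_{j=0}^{d-1}x^{2^j}$; that is, $T_k(x)=\frac{k}{d}\,T_d(x)$ in characteristic $2$. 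If $\frac{k}{d}$ is odd this equals $T_d(x)$, giving $\{x\in\GF{2^d}\mid T_d(x)=0\}$; if $\frac{k}{d}$ is even it vanishes for every $x\in\GF{2^d}$, giving all of $\GF{2^d}$.

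None of the steps presents a real obstacle; the only place demanding care is the reverse inclusion in the first identity, where one must correctly combine separability of $T_k$ (so that $|\ker T_k|=2^{k-1}$) with the computation that $u\mapsto u+u^2$ has a two-element kernel on $\GF{2^k}$ (so that $|T_2(\GF{2^k})|=2^{k-1}$). Once $\ker T_k\subseteq\GF{2^k}$ is established, the reduction to $\GF{2^d}$ and the parity dichotomy are purely routine.
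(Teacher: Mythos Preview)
Your proof is correct and follows essentially the same route as the paper: one inclusion via $T_k\circ T_2(u)=u+u^{2^k}=0$ for $u\in\GF{2^k}$, the reverse by the cardinality count $2^{k-1}$, then the observation that $T_2(\GF{2^k})\cap\GF{2^n}\subseteq\GF{2^k}\cap\GF{2^n}=\GF{2^d}$ and the identity $T_k(x)=\tfrac{k}{d}T_d(x)$ on $\GF{2^d}$. You have merely spelled out in more detail why both sets have size $2^{k-1}$ (separability of $T_k$, kernel of $T_2$ on $\GF{2^k}$), which the paper leaves implicit.
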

\begin{proof}
For $x\in \GF{2^k}$,  by Lemma \ref{lem_properties},
$T_k(T_2(x))=x+x^{2^k}=0$, and the two sets $\{x\in
\overline{\GF{2}}\,|\,T_k(x)=0\}$ and $T_2(\GF{2^k})$ have the same
cardinality $2^{k-1}$ and so they coincide. As a immediate
consequence, we have $\{x\in
\GF{2^n}\,|\,T_k(x)=0\}=T_2(\GF{2^k})\cap\GF{2^n}$.

Thus, $\{x\in
\GF{2^n}\,|\,T_k(x)=0\}=T_2(\GF{2^k})\cap\GF{2^n}\subset \GF{2^d}$,
and so $\{x\in \GF{2^n}\,|\,T_k(x)=0\}=\{x\in
\GF{2^d}\,|\,T_k(x)=0\}=\{x\in \GF{2^d}\,|\,\frac{k}{d}T_d(x)=0\}$,
which completes the proof. \qed
\end{proof}

\begin{lemma} \label{ker_T_k_l} Let $l$ be a divisor of $k$. Followings are facts.
\begin{enumerate}
\item
\[ \{x\in \overline{\GF{2}}\,|\,T_l^k(x)=0\}=T_l\circ
T_2(\GF{2^k})=\{x+x^{2^l}\,|\,x\in \GF{2^k}\}.
\]
\item
\[ \{x \in \GF{2^n}\,|\,T_l^k(x)=0\}=T_l\circ
T_2(\GF{2^k})\cap \GF{2^n}=\{x+x^{2^l}\in \GF{2^n}\,|\,x\in
\GF{2^k}\}.
\]
\end{enumerate}

In particular,
\begin{itemize}
\item If $\frac{k}{[d,l]}$ is odd, then
\[
\{x\in
\GF{2^n}\,|\,T_l^k(x)=0\}=\{x\in\GF{2^d}\,|\,T_{(d,l)}^d(x)=0\}=T_{(d,l)}\circ
T_2(\GF{2^d}).
\]
\item If $\frac{k}{[d,l]}$ is even, then
\[
\{x\in \GF{2^n}\,|\,T_l^k(x)=0\}=\GF{2^d}.
\]
\end{itemize}
\end{lemma}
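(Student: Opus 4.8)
The plan is to follow the same two-step pattern used in Lemma \ref{corT_k=0}, first identifying the full set of zeros over $\overline{\GF{2}}$ and then intersecting with $\GF{2^n}$. For part (1), I would use Transitivity (Lemma \ref{lem_properties}, item 2) to write $T_l^k = T_l^k$ and note that $T_k = T_l\circ T_l^k$; hence if $T_l^k(x)=0$ then $T_k(x)=0$, so by Lemma \ref{corT_k=0} every such $x$ lies in $T_2(\GF{2^k})$. Conversely, for $x\in\GF{2^k}$, by item 3 of Lemma \ref{lem_properties} we get $T_l^k(T_l(T_2(x))) = T_l\circ T_l^k \circ T_2(x)$... more directly: $T_l^k(x+x^{2^l})$ telescopes to $x^{2^{l\cdot(k/l)}} + \cdots$ wait — the cleanest route is to observe $T_l^k\circ T_l\circ T_2(x) = T_l\circ T_l^k\circ T_2(x)$ by Commutativity, and $T_l^k\circ T_2$ composed appropriately collapses; I would instead just verify directly that $T_l^k(y+y^{2^l}) = y^{2^{k}} + y = 0$ for $y\in\GF{2^k}$ via telescoping, since $\deg T_l^k = 2^{k-l}$ gives exactly $2^{k-l}$ roots and $T_l\circ T_2(\GF{2^k})$ has cardinality $2^{k-l}$ (the kernel of $x\mapsto x+x^{2^l}$ on $\GF{2^k}$ is $\GF{2^{\gcd(k,l)}}=\GF{2^l}$, size $2^l$, so the image has size $2^{k-l}$). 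Matching cardinalities forces equality, and part (2) is then the immediate intersection statement.

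For the two bullet points, the goal is to pin down $T_l\circ T_2(\GF{2^k})\cap\GF{2^n}$. First I would argue this intersection is contained in $\GF{2^d}$: if $z = x+x^{2^l}$ with $x\in\GF{2^k}$ and also $z\in\GF{2^n}$, then $T_k(z)=0$ (since $z$ is a zero of $T_l^k$, hence of $T_k$), so by Lemma \ref{corT_k=0} item 2, $z\in T_2(\GF{2^k})\cap\GF{2^n}\subset\GF{2^d}$. Therefore $\{x\in\GF{2^n}\,|\,T_l^k(x)=0\} = \{x\in\GF{2^d}\,|\,T_l^k(x)=0\}$. Now on $\GF{2^d}$, the Frobenius $x\mapsto x^{2^l}$ acts as $x\mapsto x^{2^{(d,l)}}$ (since $2^{l}\equiv 2^{l\bmod d}$ on $\GF{2^d}$ and the relevant exponent is governed by $\gcd(d,l)$), and more precisely $T_l^k$ restricted to $\GF{2^d}$ equals a multiple of $T_{(d,l)}^d$ — specifically $T_l^k(x) = \frac{k}{[d,l]}\cdot T_{(d,l)}^d(x)$ on $\GF{2^d}$, the scalar being the number of times the cycle of exponents wraps around, which is $\frac{k/l}{d/(d,l)} = \frac{k\,(d,l)}{ld} = \frac{k}{[d,l]}$. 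This is the analogue of the identity $T_k(x)=\frac{k}{d}T_d(x)$ on $\GF{2^d}$ used at the end of Lemma \ref{corT_k=0}.

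Given that reduction, the case split falls out in $\GF{2}$: if $\frac{k}{[d,l]}$ is odd the scalar is $1$ and $\{x\in\GF{2^d}\,|\,T_l^k(x)=0\} = \{x\in\GF{2^d}\,|\,T_{(d,l)}^d(x)=0\}$, which by part (1) applied with $(n,k,l)$ replaced by $(d,d,(d,l))$ equals $T_{(d,l)}\circ T_2(\GF{2^d})$; if $\frac{k}{[d,l]}$ is even the scalar is $0$, so $T_l^k$ vanishes identically on $\GF{2^d}$ and the zero set is all of $\GF{2^d}$.

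I expect the main obstacle to be the bookkeeping in establishing the identity $T_l^k(x) = \frac{k}{[d,l]}\,T_{(d,l)}^d(x)$ for $x\in\GF{2^n}$ (here $x\in\GF{2^d}$): one must check that the multiset of exponents $\{\,li \bmod d : 0\le i < k/l\,\}$ is exactly $\frac{k}{[d,l]}$ copies of $\{\,(d,l)\,j : 0\le j < d/(d,l)\,\}$. This is the same style of exponent-counting argument as in Proposition \ref{gcd_lcm}, and once it is in place everything else is either a cardinality match or a trivial parity observation; so while it is the crux, it is routine modular arithmetic rather than a genuine difficulty. A secondary subtlety is making sure the containment $T_l\circ T_2(\GF{2^k})\cap\GF{2^n}\subset\GF{2^d}$ is justified cleanly — I would lean on Lemma \ref{corT_k=0}(2) rather than re-deriving it, since a zero of $T_l^k$ is automatically a zero of $T_k = T_l\circ T_l^k$.
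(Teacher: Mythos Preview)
Your proposal is correct and follows essentially the same route as the paper: show $T_l^k$ vanishes on $T_l\circ T_2(\GF{2^k})$ and match the cardinality $2^{k-l}$, then restrict to $\GF{2^d}$ and use that $T_l^k(x)=\frac{k}{[d,l]}\,T_l^{[d,l]}(x)=\frac{k}{[d,l]}\,T_{(d,l)}^d(x)$ there (the paper writes the intermediate step $T_l^{[d,l]}$ and invokes Proposition~\ref{gcd_lcm}, whereas you collapse it directly). One simplification: the containment in $\GF{2^d}$ is immediate from $T_l\circ T_2(\GF{2^k})\subset\GF{2^k}$ and $\GF{2^k}\cap\GF{2^n}=\GF{2^d}$, so there is no need to pass through $T_k$ and Lemma~\ref{corT_k=0}.
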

\begin{proof}
Since $T_l^k(T_l\circ T_2(\GF{2^k}))=T_k\circ T_2(\GF{2^k})=0$, the
set $T_l\circ T_2(\GF{2^k})$ with cardinality $2^{k-l}$ is a subset
of $\{x\in \overline{\GF{2}}\,|\,T_l^k(x)=0\}$ with the same
cardinality $2^{k-l}$, i.e., the two sets coincide. Thus, $\{x \in
\GF{2^n}\,|\,T_l^k(x)=0\}=T_l\circ T_2(\GF{2^k})\cap \GF{2^n}\subset
\GF{2^d}$, and we have
\begin{align*}
&\{x\in \GF{2^n}\,|\,T_l^k(x)=0\}=\{x\in
\GF{2^d}\,|\,T_l^k(x)=0\}\\&=\{x\in
\GF{2^d}\,|\,\frac{k}{[d,l]}T_l^{[d,l]}(x)=0\}\\&=\begin{cases}\GF{2^d},
&\text{if $\frac{k}{[d,l]}$ is
even,}\\
\{x\in\GF{2^d}\,|\,T_{(d,l)}^d(x)=0\}=T_{(d,l)}\circ T_2(\GF{2^d}),
&\text{if $\frac{k}{[d,l]}$ is odd,}\end{cases}
\end{align*}
where Proposition \ref{gcd_lcm} was used for the last equality. \qed
\end{proof}

\section{Expression of solutions in closed field}
\begin{proposition} Let $L$ be any positive integer. For any $a\in\GF{2^L}^*$ and $\xi \in  \mu_{2^L+1}\setminus \{1\}$,

\[
\frac{a}{\xi+1}+\GF{2^L}=\{\frac{a}{\xi'+1}\,|\,\xi' \in
\mu_{2^L+1}\setminus \{1\}\}.
\]
\end{proposition}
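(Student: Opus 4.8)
The plan is to view everything through the map $\xi \mapsto \frac{a}{\xi+1}$ from $\mu_{2^L+1}\setminus\{1\}$ into $\GF{2^{2L}}$, and to show that composing it with the ``add a fixed element of $\GF{2^L}$'' operation keeps us inside the image. First I would fix $\xi_0 \in \mu_{2^L+1}\setminus\{1\}$ and an arbitrary $c\in\GF{2^L}$, set $y = \frac{a}{\xi_0+1}+c$, and look for $\xi'\in\mu_{2^L+1}\setminus\{1\}$ with $y=\frac{a}{\xi'+1}$; solving formally gives $\xi' = \frac{a}{y}+1 = \frac{a}{\frac{a}{\xi_0+1}+c}+1$. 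So the real content is the claim that this $\xi'$ actually lies on the unit circle $\mu_{2^L+1}$ (and is $\neq 1$ except in a degenerate case that must be ruled out or absorbed).

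The key computational step is a norm/conjugation check: an element $z\in\GF{2^{2L}}^*$ lies in $\mu_{2^L+1}$ iff $z^{2^L+1}=1$, i.e. iff $z\cdot z^{2^L}=1$, i.e. iff $z^{2^L}=z^{-1}$. I would therefore verify $(\xi'+1)^{2^L} = (\xi'+1)^{-1}\cdot\xi'$ or, more cleanly, work directly with $\xi'+1 = \frac{a}{y}$ and show $y^{2^L} = \xi_0 y/a\cdot(\text{something})$; concretely, since $a\in\GF{2^L}$ we have $a^{2^L}=a$, and $\left(\frac{a}{\xi_0+1}\right)^{2^L} = \frac{a}{\xi_0^{2^L}+1} = \frac{a}{\xi_0^{-1}+1} = \frac{a\xi_0}{1+\xi_0} = \frac{\xi_0 a}{\xi_0+1}$ (characteristic two). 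Hence, writing $u=\frac{a}{\xi_0+1}$, we get $u^{2^L}=\xi_0 u$, so $y^{2^L}=(u+c)^{2^L}=\xi_0 u + c$. Then $\xi'+1 = a/y$ satisfies $(\xi'+1)^{2^L}(\xi'+1) = \frac{a^2}{y^{2^L}y} = \frac{a^2}{(\xi_0 u+c)(u+c)}$, and I would expand the denominator using $u^2\cdot\text{stuff}$ — actually it is cleaner to compute $\xi'^{2^L}$ directly from $\xi' = a/y + 1$ and $y^{2^L}=\xi_0 u+c$ and check $\xi'^{2^L}\xi' = 1$ by clearing denominators, which reduces to the identity $(\xi_0 u+c)(u+c) = \xi_0(u+c)(\ldots)$; the algebra collapses because $u$ and $c$ interact only through the fixed scalar $\xi_0$.

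After the membership $\xi'\in\mu_{2^L+1}$ is established, two bookkeeping points remain. One is injectivity/counting: the map $\xi\mapsto\frac{a}{\xi+1}$ is injective on $\mu_{2^L+1}\setminus\{1\}$ (if $\frac{a}{\xi+1}=\frac{a}{\xi''+1}$ then $\xi=\xi''$ since $a\neq 0$), so the right-hand set has exactly $|\mu_{2^L+1}|-1 = 2^L$ elements, which matches $|c+\GF{2^L}| = 2^L$; together with the inclusion this forces equality of the two sets. The other point is to check the excluded value $\xi'=1$: this would mean $a/y = 0$, impossible since $a\neq 0$, so in fact $\xi'$ automatically avoids $1$, and the only thing to confirm is that as $c$ ranges over $\GF{2^L}$ we never need $\xi'=1$, nor does the denominator $y$ vanish — but $y=0$ would give $u=c\in\GF{2^L}$, i.e. $\frac{a}{\xi_0+1}\in\GF{2^L}$, which via $u^{2^L}=\xi_0 u$ forces $\xi_0 = 1$, contradicting $\xi_0\neq 1$; so $y\neq 0$ always.

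I expect the main obstacle to be the norm-check algebra in the second step: keeping track of which powers of $\xi_0$ appear after the Frobenius $x\mapsto x^{2^L}$ is applied and confirming that the denominator $(\xi_0 u+c)(u+c)$ rearranges to exactly cancel against $a^2$ times the numerator of $\xi'^{2^L}\xi'$. Everything else — injectivity, the cardinality count $2^L=2^L$, and ruling out the degenerate $\xi'=1$ and $y=0$ cases — is routine once that identity is in hand. A clean way to organize the computation is to prove once and for all the lemma ``for $u\in\GF{2^{2L}}^*$ with $u^{2^L}=\xi_0 u$ and $c\in\GF{2^L}$, the element $\frac{a}{u+c}+1$ lies in $\mu_{2^L+1}$,'' which isolates the Frobenius bookkeeping from the set-equality argument.
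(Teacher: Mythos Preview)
Your plan is correct and follows essentially the same route as the paper: solve $\frac{a}{\xi+1}+\eta=\frac{a}{\xi'+1}$ for $\xi'$, then verify $\xi'^{2^L}=1/\xi'$ using $\xi^{2^L}=1/\xi$ and $a,\eta\in\GF{2^L}$; the paper writes $\xi'=\dfrac{a\xi+\eta\xi+\eta}{a+\eta\xi+\eta}$ directly and checks the Frobenius identity in one line, whereas you pass through the auxiliary $u=\frac{a}{\xi+1}$ with $u^{2^L}=\xi u$, but this is the same computation. The cardinality/injectivity argument and the exclusion of the degenerate cases $y=0$, $\xi'=1$ that you spell out are left implicit in the paper.
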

\begin{proof}
Let $\eta \in \GF{2^L}$. Then we will show
\[\frac{a}{\xi+1}+\eta=\frac{a}{\xi'+1}\] for some $\xi' \in
\mu_{2^L+1}\setminus \{1\}$. Since
$\frac{a}{\xi+1}+\eta=\frac{a}{\frac{a\xi+\eta \xi +\eta}{a+\eta \xi
+\eta}+1}$, it is enough to show \[\xi'=\frac{a\xi+\eta \xi
+\eta}{a+\eta \xi +\eta}\in \mu_{2^L+1}\setminus \{1\}.\] In fact,
obviously $\xi'\neq 1$ and \[\xi'^{2^L}=\frac{a\xi^{2^L}+\eta
\xi^{2^L} +\eta}{a+\eta \xi^{2^L} +\eta}=\frac{a/\xi+\eta/ \xi
+\eta}{a+\eta/ \xi +\eta}=1/\xi'.\] \qed
\end{proof}

\begin{theorem}\label{T_kneq0} Let $a\in \GF{2^n}^*$. Let $L$ be any multiple of the least common multiple $[n,k]$ of
two integers $n$ and $k$. Then, for any $\xi \in
\mu_{2^L+1}\setminus \{1\}$,
\[
x_0=T_k^{L}\circ T_2(\frac{a}{\xi+1})
\]
is a solution to the equation $T_k(x)=a$. In fact, for any $\xi \in
\mu_{2^L+1}\setminus \{1\}$,
\begin{align*}
\{x\in \overline{\GF{2}}\,|\, T_k(x)=a \}&=\{T_k^{L}\circ
T_2(\frac{a}{\zeta+1})\,|\,\zeta \in \mu_{2^L+1}\setminus
\{1\}\}\\
&=T_k^{L}\circ
T_2(\frac{a}{\xi+1}+\GF{2^L}) \\
&=T_k^{L}\circ T_2(\frac{a}{\xi+1})+T_2(\GF{2^k}).
\end{align*}
\end{theorem}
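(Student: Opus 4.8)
The plan is to verify the first equality by a direct computation showing $T_k(x_0)=a$, then establish the set identities by a cardinality count, and finally simplify using the previous proposition and Lemma~\ref{corT_k=0}.

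First I would check that $x_0=T_k^{L}\circ T_2(\tfrac{a}{\xi+1})$ actually satisfies $T_k(x_0)=a$. Applying Lemma~\ref{lem_properties}, we have $T_k\circ T_k^{L}=T_k^{L}\circ T_k$ and $T_k\circ T_2(y)=y+y^{2^k}$, so
\[
T_k(x_0)=T_k^{L}\Bigl(\tfrac{a}{\xi+1}+\bigl(\tfrac{a}{\xi+1}\bigr)^{2^k}\Bigr).
\]
Since $a\in\GF{2^n}^\star$ and $\xi\in\mu_{2^L+1}$ with $L$ a multiple of $[n,k]$, raising to the $2^k$ power acts on $\xi$ by $\xi\mapsto\xi^{2^k}$ and fixes $a$; the key point is that $T_k^{L}$ applied to $z+z^{2^k}$ telescopes. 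Indeed $T_k^{L}(w)=\sum_{i=0}^{L/k-1}w^{2^{ik}}$, so $T_k^{L}(z+z^{2^k})=z+z^{2^{L}}$, and since $\tfrac{a}{\xi+1}\in\GF{2^{2L}}$ (because $a\in\GF{2^L}$ and $\xi^{2^L}=\xi^{-1}$ gives $\xi\in\GF{2^{2L}}$) we get $z+z^{2^{L}}=z+\overline z$ where $\overline z=z^{2^L}$ denotes the conjugate over $\GF{2^L}$. A short computation with $z=\tfrac{a}{\xi+1}$ and $\overline z=\tfrac{a}{\xi^{-1}+1}=\tfrac{a\xi}{\xi+1}$ yields $z+\overline z=\tfrac{a(1+\xi)}{\xi+1}=a$, as desired. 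This also shows $x_0\in\GF{2^{2L}}$.

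Next I would prove the set equalities. The set $\{x\in\overline{\GF{2}}\,|\,T_k(x)=a\}$ has exactly $2^{k-1}$ elements since $T_k$ has degree $2^{k-1}$ and no multiple roots. For the right-hand side, the previous proposition gives $\tfrac{a}{\xi+1}+\GF{2^L}=\{\tfrac{a}{\zeta+1}\,|\,\zeta\in\mu_{2^L+1}\setminus\{1\}\}$, and $T_2$ composed with $T_k^{L}$ is linear, so $T_k^{L}\circ T_2(\tfrac{a}{\xi+1}+\GF{2^L})=x_0+T_k^{L}\circ T_2(\GF{2^L})$. By Lemma~\ref{corT_k=0}(1) and Lemma~\ref{ker_T_k_l} (or directly, since $T_k^{L}\circ T_2$ maps $\GF{2^L}$ onto the kernel-type set $\{y+y^{2^k}\,|\,y\in\GF{2^k}\}=T_2(\GF{2^k})$ — here one uses $T_k^L(T_2(\GF{2^L}))=T_2(T_k^L(\GF{2^L}))=T_2(\GF{2^k})$ by Corollary~\ref{T_image} and commutativity), this set has cardinality $2^{k-1}$. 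Since all $x_0+T_2(\GF{2^k})$ are roots of $T_k(x)=a$ (each differs from $x_0$ by an element of $\ker T_k$), and both sides have the same cardinality $2^{k-1}$, they coincide. The chain of three displayed equalities then follows by assembling these observations.

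The main obstacle I expect is bookkeeping with the Frobenius orbits: making precise that $\tfrac{a}{\xi+1}$ and its relevant powers all live in $\GF{2^{2L}}$, that $\xi^{2^L}=\xi^{-1}$ (which is exactly the defining property $\xi^{2^L+1}=1$), and that the telescoping $T_k^{L}(z+z^{2^k})=z+z^{2^{L}}$ is valid — this requires $k\mid L$, guaranteed since $L$ is a multiple of $[n,k]$. A secondary subtlety is justifying $T_k^{L}\circ T_2(\GF{2^L})=T_2(\GF{2^k})$ cleanly from the lemmas on images rather than recomputing; once that is in place the cardinality argument is immediate. No genuinely hard step remains — the theorem is essentially a packaging of Lemma~\ref{lem_properties}, Corollary~\ref{T_image}, the preceding proposition, and a degree count.
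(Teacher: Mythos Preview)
Your proposal is correct and follows essentially the same approach as the paper: verify $T_k(x_0)=a$ by collapsing $T_k\circ T_k^{L}\circ T_2$ to the map $z\mapsto z+z^{2^L}$ and computing $\tfrac{a}{\xi+1}+\tfrac{a}{\xi^{-1}+1}=a$, then invoke the preceding proposition and a cardinality count using $T_k^{L}\circ T_2(\GF{2^L})=T_2(\GF{2^k})$. The only cosmetic difference is that the paper uses transitivity $T_k^{L}\circ T_k=T_L$ together with item~3 of Lemma~\ref{lem_properties} in one step, whereas you write out the telescoping $T_k^{L}(z+z^{2^k})=z+z^{2^L}$ explicitly; the content is identical.
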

\begin{proof}
Let us set $x=T_k^{L}\circ T_2(\frac{a}{\xi+1})$ for $\xi \in
\mu_{2^L+1}$. Then, by Lemma \ref{lem_properties}, one has
\begin{align*}
&T_k(x)=T_k^{L}\circ T_k \circ T_2(\frac{a}{\xi+1})\\
&=T_L\circ T_2(\frac{a}{\xi+1})\\
&=\frac{a}{\xi+1}+\left(\frac{a}{\xi+1}\right)^{2^L}\\
&=\frac{a}{\xi+1}+\frac{a}{\xi^{2^L}+1}\\
&=\frac{a}{\xi+1}+\frac{a}{1/\xi+1}=a.
\end{align*}
On the other hand, $\#\{x\in \overline{\GF{2}}\,|\, T_k(x)=a
\}=2^{k-1}$, and $\#T_k^{L}\circ
T_2(\frac{a}{\xi+1}+\GF{2^L})=\#T_k^{L}\circ
T_2(\GF{2^L})=T_2(\GF{2^k})=2^{k-1}$. This completes the proof. \qed
\end{proof}

\begin{corollary} Let $a\in \GF{2^n}$ and $l$ be a divisor of $k$. Let $L$ be any multiple of the least common multiple $[n,k]$ of
two integers $n$ and $k$. Then, for any $\xi \in
\mu_{2^L+1}\setminus \{1\}$,
\[
x_0=T_l\circ T_k^{L}\circ T_2(\frac{a}{\xi+1})
\]
is a solution to the equation $T_l^k(x)=a$. In fact, for any $\xi
\in \mu_{2^L+1}\setminus \{1\}$,
\begin{align*}
\{x\in \overline{\GF{2}}\,|\, T_l^k(x)=a \}&=\{T_l\circ T_k^{L}\circ
T_2(\frac{a}{\zeta+1})\,|\,\zeta \in \mu_{2^L+1}\setminus
\{1\}\}\\
&=T_l\circ T_k^{L}\circ
T_2(\frac{a}{\xi+1}+\GF{2^L})\\
&=T_l\circ T_k^{L}\circ T_2(\frac{a}{\xi+1})+T_l\circ T_2(\GF{2^k}).
\end{align*}
\end{corollary}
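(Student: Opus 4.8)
The plan is to reduce the corollary to Theorem~\ref{T_kneq0} via the transitivity and commutativity properties of Lemma~\ref{lem_properties}. First I would observe that $T_l^k = T_l \circ T_k$ (this is the $s=1$, with appropriate renaming, case of transitivity: $T_l^k \circ T_1^l = T_1^k$, i.e. $T_l^k \circ T_l = T_k$; and similarly $T_l \circ T_l^k = T_k$, so in particular $T_l^k(x) = a \iff$ well, one must be slightly careful). Actually the cleanest route is: since $T_k = T_l \circ T_l^k = T_l^k \circ T_l$ by transitivity, and since $T_l^k(x_0) = ?$, I would directly verify that $x_0 = T_l \circ T_k^L \circ T_2(\frac{a}{\xi+1})$ satisfies $T_l^k(x_0) = a$. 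Indeed, $T_l^k(x_0) = T_l^k \circ T_l \circ T_k^L \circ T_2(\frac{a}{\xi+1}) = T_l \circ T_l^k \circ T_k^L \circ T_2(\frac{a}{\xi+1})$ by commutativity, $= T_l \circ T_k^L \circ T_l^k \circ T_2(\frac{a}{\xi+1})$ — hmm, this is getting circular; let me restructure.

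Cleanest: note $x_0 = T_l(y_0)$ where $y_0 = T_k^L \circ T_2(\frac{a}{\xi+1})$ is, by Theorem~\ref{T_kneq0}, a solution of $T_k(y_0)=a$. Then $T_l^k(x_0) = T_l^k \circ T_l(y_0)$. By transitivity ($T_l^k \circ T_l = T_l^k \circ T_1^l$; we need $T_l^k \circ T_s^l = T_s^k$ with $s=1$, giving $T_l^k \circ T_l = T_k$), hence $T_l^k(x_0) = T_k(y_0) = a$. So the single representative $x_0$ is a solution.

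For the set equalities I would argue by cardinality exactly as in Theorem~\ref{T_kneq0}. The leftmost set $\{x \in \overline{\GF{2}} : T_l^k(x) = a\}$ has size $\deg T_l^k = 2^{k-l}$ (no multiple roots). For the first claimed equality, apply $T_l$ to the analogous identity from Theorem~\ref{T_kneq0}: $\{T_l^k \circ T_2(\frac{a}{\zeta+1}) : \zeta\} = \{y : T_k(y) = a\}$, so $\{T_l \circ T_k^L \circ T_2(\frac{a}{\zeta+1}) : \zeta\} = T_l(\{y : T_k(y)=a\})$; I then need $T_l(\{y : T_k(y)=a\}) = \{x : T_l^k(x) = a\}$. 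The inclusion $\subseteq$ is the computation just done; equality follows since both sides, and the image set, have to be reconciled by counting — here is the one genuinely delicate point, because $T_l$ restricted to $\{y: T_k(y)=a\}$ need not be injective. However, $\{y : T_k(y) = a\}$ is a coset of $\ker T_k$, and $T_l$ maps it onto a coset of $T_l(\ker T_k)$; one checks $|T_l(\ker T_k)| = |\ker T_l^k| = 2^{k-l}$ using Lemma~\ref{corT_k=0} and Lemma~\ref{ker_T_k_l} (indeed $\ker T_k = T_2(\GF{2^k})$ has size $2^{k-1}$, $T_l$ has kernel $T_2(\GF{2^l})$ of size $2^{l-1}$ inside it, so the image has size $2^{k-1}/2^{l-1} = 2^{k-l}$, matching). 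The remaining two equalities then follow: the second from Theorem~\ref{T_kneq0}'s middle line with $T_l$ applied, and the third from $T_l(T_k^L \circ T_2(\frac{a}{\xi+1}) + T_2(\GF{2^k})) = T_l \circ T_k^L \circ T_2(\frac{a}{\xi+1}) + T_l \circ T_2(\GF{2^k})$ by linearity of $T_l$.

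The main obstacle is precisely the surjectivity-plus-cardinality bookkeeping in the first equality: one must confirm that passing from the $T_k$-solution set to the $T_l^k$-solution set via $T_l$ neither loses nor gains points, i.e. that $|T_l(\ker T_k)| = |\ker T_l^k|$ exactly. Once that kernel-size identity is in hand (a short consequence of Lemmas~\ref{corT_k=0} and~\ref{ker_T_k_l}, or directly of $\deg$ considerations), everything else is routine application of Lemma~\ref{lem_properties} and linearity, mirroring the proof of Theorem~\ref{T_kneq0}.
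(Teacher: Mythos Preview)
Your proposal is correct and follows exactly the route the paper intends: the corollary is stated without proof because it is obtained from Theorem~\ref{T_kneq0} by applying $T_l$ throughout and invoking transitivity $T_l^k\circ T_l=T_k$ from Lemma~\ref{lem_properties}. You have also correctly isolated and dispatched the one point that actually needs an argument, namely that $T_l$ carries the $T_k$-solution set onto the full $T_l^k$-solution set, via the kernel-size identity $T_l(\ker T_k)=T_l\circ T_2(\GF{2^k})=\ker T_l^k$ of Lemma~\ref{ker_T_k_l}.
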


Note that it is easy to take $\xi \in \mu_{2^{L}+1}\setminus \{1\}$:
Choose any $s\in\GF{2^{2L}}\setminus \GF{2^{L}}$, then calculate
$\xi=s^{2^{L}-1}$.

\section{Solutions in $\GF{2^n}$}

\begin{theorem} For $a\in \GF{2^n}$, the linear equation
\begin{equation}
T_k^{2k}(x)=a   \text{    (i.e. $x^{2^k}+x=a$)}
\end{equation}  has solution in $\GF{2^n}$
if and only if \begin{equation} T_{d}^n(a)=0.
\end{equation} When $T_{d}^n(a)=0$, this equation $T_k^{2k}(x)=a$ has
exactly $2^{d}$ solutions in $\GF{2^n}$:
\begin{itemize}
\item If $\frac{k}{d}$ is odd, then for any $\xi\in \mu_{2^{n}+1}\setminus\{1\}$
\begin{equation}\label{odd_T_k_2} \{x\in
\GF{2^n}\,|\,x^{2^k}+x=a\}=T_k^{[n,k]}(\frac{a}{\xi+1})+\GF{2^d}.
\end{equation}
\item If $\frac{k}{d}$ is even, then for any $\xi\in
\mu_{2^{n}+1}\setminus\{1\}$
\begin{equation}\label{even_T_k_2} \{x\in
\GF{2^n}\,|\,x^{2^k}+x=a\}=T_{n-k}^{[n,n-k]}(\frac{a^{2^{n-k}}}{\xi+1})+\GF{2^d}.
\end{equation}
\end{itemize}
\end{theorem}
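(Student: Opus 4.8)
The plan is to establish the solvability criterion and the solution count first, then separately verify the two explicit formulas. For solvability, note that $x^{2^k}+x = T_k^{2k}(x)$ by Lemma \ref{lem_properties}(3), and I would apply the composition identity: $T_d^n(a) = T_d^n \circ T_k^{2k}(x)$ whenever $T_k^{2k}(x) = a$. Unwinding this composition — using transitivity and commutativity from Lemma \ref{lem_properties}, together with Proposition \ref{gcd_lcm} which tells us $T_d^n$ on $\GF{2^n}$ agrees with $T_k^{[n,k]}$ — should show that $T_d^n(a)$ is forced to be a fixed trace-like quantity that vanishes exactly when a solution exists. Concretely, $T_d^n(T_k^{2k}(x)) = T_k^{[n,k]}(x^{2^k}+x)$ telescopes (it is a sum of consecutive Frobenius powers of $x^{2^k}+x$ over a full period) to something in the kernel; the standard counting argument (the map $x \mapsto x^{2^k}+x$ on $\GF{2^n}$ has kernel $\GF{2^d}$, hence image of size $2^{n-d}$, which must coincide with $\{a : T_d^n(a)=0\}$ since both have that size) then gives the ``if and only if'' and simultaneously the exact count $2^d$ of solutions.

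\textbf{For the explicit formula when $\frac{k}{d}$ is odd:} I would invoke Theorem \ref{T_kneq0} with $L = [n,k]$, which already gives $\{x \in \overline{\GF{2}} : T_k(x) = a\} = T_k^{[n,k]}\circ T_2(\frac{a}{\xi+1}) + T_2(\GF{2^k})$. But here the target equation is $T_k^{2k}(x) = a$, i.e.\ $T_k \circ T_2(x) = a$, not $T_k(x) = a$; so I would instead use the \emph{corollary} to Theorem \ref{T_kneq0} in the appropriate form, or re-derive directly that $x_0 = T_k^{[n,k]}(\frac{a}{\xi+1})$ satisfies $x_0^{2^k}+x_0 = T_k^{2k}(x_0) = T_k\circ T_2 \circ T_k^{[n,k]}(\frac{a}{\xi+1}) = T_k \circ T_k^{[n,k]} \circ T_2(\frac{a}{\xi+1}) = T_{[n,k]}\circ T_2(\frac{a}{\xi+1}) = \frac{a}{\xi+1} + (\frac{a}{\xi+1})^{2^{[n,k]}} = a$ by the same $\mu_{2^L+1}$ computation as in Theorem \ref{T_kneq0}. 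Then one must check $x_0 \in \GF{2^n}$: since $a \in \GF{2^n}$ and $L = [n,k]$ is a multiple of $n$, one checks $T_k^{[n,k]}(\frac{a}{\xi+1})$ is fixed by Frobenius$^n$ — here is where $\xi \in \mu_{2^n+1}$ (note the statement uses $\mu_{2^n+1}$, forcing $n \mid L$ to make sense of $\xi$ through the $2^L$-power) makes $\xi^{2^n} = 1/\xi$, and the telescoping sum $T_k^{[n,k]}$ absorbs the resulting shift. Finally, adding $\GF{2^d}$ accounts for the full solution set because, by Lemma \ref{ker_T_k_l} / Lemma \ref{corT_k=0}, the kernel of $x \mapsto x^{2^k}+x$ inside $\GF{2^n}$ is exactly $\GF{2^d}$ when $\frac{k}{d}$ is odd (this is the ``$\frac{k}{[d,1]} = \frac{k}{d}$ odd'' case), and $2^d$ matches the count from the first part.

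\textbf{For the even case,} the key observation is that $x^{2^k}+x = a$ is equivalent to $x^{2^{n-k}}+x = a^{2^{n-k}}$ over $\GF{2^n}$ (apply Frobenius$^{n-k}$ and use $x^{2^n}=x$), and $\gcd(n,n-k) = \gcd(n,k) = d$ while $\frac{n-k}{d}$ is \emph{odd} precisely when $\frac{k}{d}$ is even (since $\frac{n}{d} = \frac{k}{d} + \frac{n-k}{d}$ and $\frac{n}{d}$ has fixed parity — actually one checks $\frac{n}{d}$ odd forces exactly one of the two summands even; if $\frac{n}{d}$ is even then $\frac kd$ even iff $\frac{n-k}d$ even, but then the kernel is all of $\GF{2^d}$ in both descriptions and one should double-check consistency). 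So the even case reduces to the odd case applied to the exponent $n-k$ and the constant $a^{2^{n-k}}$, yielding $T_{n-k}^{[n,n-k]}(\frac{a^{2^{n-k}}}{\xi+1}) + \GF{2^d}$ directly.

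\textbf{The main obstacle} I anticipate is the membership claim $x_0 \in \GF{2^n}$: verifying that the closed-form element, which a priori lives only in $\GF{2^{2L}}$, is actually fixed by the $n$-th power of Frobenius. This requires carefully exploiting that $a \in \GF{2^n}$, that $L$ is a multiple of $n$, and the exact telescoping structure of $T_k^{[n,k]}$ together with the involution $\xi \mapsto \xi^{2^n} = \xi^{-1}$ on $\mu_{2^n+1}$; the computation is essentially the same flavor as the core identity in the proof of Theorem \ref{T_kneq0} but now tracking the Frobenius$^n$-action rather than Frobenius$^L$. The parity bookkeeping relating $\frac{k}{d}$, $\frac{n-k}{d}$, and the two kernel descriptions in Lemma \ref{ker_T_k_l} is the second place where care is needed, but it is routine once the odd case is firmly in hand.
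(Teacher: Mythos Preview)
Your proposal is correct and follows the same route as the paper: kernel--image counting for the solvability criterion and solution count, direct verification that $x_0 = T_k^{[n,k]}\bigl(a/(\xi+1)\bigr)$ solves the equation and lies in $\GF{2^n}$ (the paper handles membership via $x_0 + x_0^{2^n} = T_k^{[n,k]}(a) = T_d^n(a) = 0$, using Proposition~\ref{gcd_lcm}), and reduction of the even case to the odd one by the substitution $k \mapsto n-k$. The only over-complications in your sketch are that the kernel of $x \mapsto x^{2^k}+x$ on $\GF{2^n}$ is simply $\GF{2^k}\cap\GF{2^n} = \GF{2^d}$ regardless of parity (no appeal to Lemma~\ref{ker_T_k_l} is needed), and your parity worry in the even case dissolves because $\gcd(n/d,k/d)=1$ forces $n/d$ odd whenever $k/d$ is even, so $(n-k)/d$ is automatically odd.
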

\begin{proof}
This easily follows from the fact that the linear operator
$T_k^{2k}(x)=x^{2^k}+x$ on $\GF{2^n}$ has the kernel of dimension
$d$ and, thus, the number of elements in the image of $T_k^{2k}$ is
$2^{n-d}$. For any $x\in \GF{2^n}$, we have
\begin{align*}
T_{d}^n(x^{2^k}+x) &= T_{d}^n(x^{2^k})+T_{d}^n(x)\\
 &=T_{d}^n(x)^{2^k}+T_{d}^n(x)\\
 &=T_{d}^n(x)+T_{d}^n(x) \text{ (since $T_{d}^n(x)\subset \GF{2^{d}}\subset \GF{2^k}$) }\\
 &=0
\end{align*}
leading to the conclusion that the image of $T_k^{2k}$ contains such
all elements in $\GF{2^n}$ since the total number of such elements
in $\GF{2^n}$ is exactly $2^{n-d}$.

Let us prove the second part of the theorem.  First, let us assume
$\frac{k}{d}$ is odd. Then, $x_0=T_k^{[n,k]}(\frac{a}{\xi+1})$ is a
solution to the equation since $T_k\circ
T_2(T_k^{[n,k]}(\frac{a}{\xi+1}))=T_{[n,k]}\circ
T_2(\frac{a}{\xi+1})=\frac{a}{\xi+1}+(\frac{a}{\xi+1})^{2^{[n,k]}}=a$.
On the other hand, under the condition $T^n_d(a)=0$, this solution
really belongs to $\GF{2^n}$. In fact,
$x_0+x_0^{2^n}=T_k^{[n,k]}(\frac{a}{\xi+1}+(\frac{a}{\xi+1})^{2^n})=T_k^{[n,k]}(a)=T^n_d(a)=0$.

If $\frac{k}{d}$ is even, then we will consider a new equation
$x^{2^{n-k}}+x=a^{2^{n-k}}$ instead of the original equation
$x^{2^k}+x=a$. As obvious, this new equation shares the same
$\GF{2^n}$-solution set with the original equation. Since
$\frac{n-k}{d}$ is odd as $\frac{k}{d}$ is even, we can apply the
solution formula \eqref{odd_T_k_2} for odd case to this new
equation. \qed
\end{proof}

\begin{corollary}\label{quadratic_new}
For any $\xi \in \mu_{2^{n}+1}\setminus \{1\}$,
$x_0=T_{n}(\frac{a}{\xi+1})$ and $x_0+1$ are solutions of
$x^2+x+a=0$. These solutions are in $\GF{2^n}$ if and only if
\[
T_n(a)=0.
\]
\end{corollary}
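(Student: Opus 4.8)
The plan is to read this off as the case $k=1$ of the preceding theorem on $x^{2^k}+x=a$, supplemented by a short direct check that the displayed $x_0$ already solves the quadratic over $\overline{\GF{2}}$.

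First I would note that $x^2+x+a=0$ is precisely the equation $T_k^{2k}(x)=a$ (i.e.\ $x^{2^k}+x=a$) of the preceding theorem with $k=1$; here $d=\gcd(n,1)=1$ and $\tfrac{k}{d}=1$ is odd, so the "odd" branch applies. That theorem then says the equation has a solution in $\GF{2^n}$ if and only if $T_d^n(a)=T_n(a)=0$, and when $T_n(a)=0$ it has $2^d=2$ solutions in $\GF{2^n}$, given by \eqref{odd_T_k_2} as
\[
\{x\in\GF{2^n}\,|\,x^2+x=a\}=T_k^{[n,k]}\Bigl(\tfrac{a}{\xi+1}\Bigr)+\GF{2^d}=T_n\Bigl(\tfrac{a}{\xi+1}\Bigr)+\GF{2}=\{x_0,x_0+1\},
\]
where $x_0=T_n\bigl(\tfrac{a}{\xi+1}\bigr)$, for any $\xi\in\mu_{2^n+1}\setminus\{1\}$. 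This already yields the "if and only if" clause and the fact that, when $T_n(a)=0$, $x_0$ and $x_0+1$ are the $\GF{2^n}$-solutions.

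Next I would verify the first sentence of the statement, namely that $x_0$ and $x_0+1$ solve $x^2+x+a=0$ in $\overline{\GF{2}}$ with no hypothesis on $a$. Writing $u=\tfrac{a}{\xi+1}$ and telescoping in characteristic $2$,
\[
x_0^2+x_0=T_n(u)^2+T_n(u)=\sum_{i=1}^{n}u^{2^i}+\sum_{i=0}^{n-1}u^{2^i}=u+u^{2^n}.
\]
Since $a\in\GF{2^n}$ and $\xi^{2^n+1}=1$, we get $u^{2^n}=\tfrac{a^{2^n}}{(\xi+1)^{2^n}}=\tfrac{a}{\xi^{-1}+1}=\tfrac{a\xi}{\xi+1}$, so $x_0^2+x_0=\tfrac{a}{\xi+1}+\tfrac{a\xi}{\xi+1}=a$, i.e.\ $x_0^2+x_0+a=0$; and then $(x_0+1)^2+(x_0+1)+a=x_0^2+x_0+a=0$ as well. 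As a quadratic over a field has at most two roots, $x_0$ and $x_0+1$ are exactly its two roots in $\overline{\GF{2}}$.

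I do not expect a genuine obstacle here: the argument is a bookkeeping specialization of the preceding theorem plus the three-line computation above. The only point needing care is the $\overline{\GF{2}}$ versus $\GF{2^n}$ distinction — the preceding theorem is phrased for $\GF{2^n}$-solutions, but in its proof the element $T_k^{[n,k]}(\tfrac{a}{\xi+1})$ is shown to satisfy $x^{2^k}+x=a$ without invoking $T_d^n(a)=0$ (that hypothesis is used only to place the element in $\GF{2^n}$), so the unconditional first sentence is legitimately inherited; alternatively one simply presents the self-contained computation of $x_0^2+x_0$ given above.
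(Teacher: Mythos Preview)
Your proposal is correct and follows exactly the route the paper intends: the corollary is the specialization $k=1$ (hence $d=1$, $[n,k]=n$, $k/d$ odd) of the preceding theorem on $x^{2^k}+x=a$, and your direct computation of $x_0^2+x_0$ is just the $k=1$ instance of that theorem's proof that $T_k^{[n,k]}\bigl(\tfrac{a}{\xi+1}\bigr)$ satisfies the equation. Your remark that the hypothesis $T_n(a)=0$ is only needed to place $x_0$ in $\GF{2^n}$, not to verify $x_0^2+x_0=a$, is also in line with how the paper's proof of the preceding theorem is structured.
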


\begin{theorem}\label{Solution_T_k} Let $a\in \GF{2^n}$. Consider the linear equation
\begin{equation}\label{eqT_k}
T_k(x)=a \text{    (i.e. $x+x^2+\cdots+x^{2^{k-1}}=a$)}
\end{equation}
\begin{enumerate}
\item Let $\frac{k}{d}$ be odd. Then,
equation \eqref{eqT_k} has a solution in $\GF{2^n}$ if and only if
\begin{equation}
T_d^n(a)\in \GF{2}.
\end{equation}
 When $T_d^n(a)\in \GF{2}$, the equation
$T_k(x)=a$ has exactly $2^{d-1}$ solutions in $\GF{2^n}$: for any
$\xi\in \mu_{2^{n}+1}\setminus\{1\}$
\begin{equation} \{x\in
\GF{2^n}\,|\,T_k(x)=a\}=T_2\circ
T_k^{[n,k]}(\frac{a}{\xi+1})+T_2(\GF{2^d}).
\end{equation}
\item Let $\frac{k}{d}$ be even. Then,
equation \eqref{eqT_k} has a solution in $\GF{2^n}$ if and only if
\begin{equation}
T_d^n(a)=0.
\end{equation}
 When $T_d^n(a)=0$, the equation $T_k(x)=a$ has
exactly $2^d$ solutions in $\GF{2^n}$: for any $\xi\in
\mu_{2^{n}+1}\setminus\{1\}$
\begin{equation} \{x\in
\GF{2^n}\,|\,T_k(x)=a\}=T_2\circ
T_{n-k}^{[n,n-k]}(\frac{a^{2^{n-k}}}{\xi+1})+\GF{2^d}.
\end{equation}

\end{enumerate}
\end{theorem}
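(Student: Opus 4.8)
The plan is to treat $T_k(x)=a$ as an affine equation over $\GF{2}$ on the $\GF{2}$-vector space $\GF{2^n}$: once we know $\ker(T_k|_{\GF{2^n}})$ and exhibit one solution $x_0\in\GF{2^n}$, the whole solution set is the coset $x_0+\ker(T_k|_{\GF{2^n}})$. The kernel is already supplied by Lemma \ref{corT_k=0}: it is $T_2(\GF{2^d})$, of cardinality $2^{d-1}$, when $\frac{k}{d}$ is odd, and it is $\GF{2^d}$, of cardinality $2^d$, when $\frac{k}{d}$ is even. So everything reduces to the solvability criterion and to producing the displayed $x_0$.

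The bridge to the already-solved linear equation $w^{2^k}+w=a$ is Lemma \ref{lem_properties}(3), which identifies $T_k\circ T_2$ with the map $w\mapsto w+w^{2^k}$. Hence, if $w_0$ is a root of $w^{2^k}+w=a$, then $x_0:=T_2(w_0)$ satisfies $T_k(x_0)=T_k(T_2(w_0))=w_0+w_0^{2^k}=a$, and, since $T_2$ is $\GF{2}$-linear and commutes with Frobenius, $x_0^{2^n}+x_0=T_2\big(w_0^{2^n}+w_0\big)$. I take for $w_0$ the explicit root furnished by the preceding theorem on $x^{2^k}+x=a$: namely $w_0=T_k^{[n,k]}\!\big(\frac{a}{\xi+1}\big)$ when $\frac{k}{d}$ is odd, and $w_0=T_{n-k}^{[n,n-k]}\!\big(\frac{a^{2^{n-k}}}{\xi+1}\big)$ when $\frac{k}{d}$ is even — the latter being legitimate since then $\frac{n-k}{d}$ is odd. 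In both cases $x_0=T_2(w_0)$ is exactly the element in the statement. That $w_0$ is a root is, in the odd case, the one-line identity $w_0+w_0^{2^k}=T_k\circ T_2(w_0)=T_{[n,k]}\circ T_2\big(\frac{a}{\xi+1}\big)=\frac{a}{\xi+1}+\big(\frac{a}{\xi+1}\big)^{2^{[n,k]}}=a$, which uses Lemma \ref{lem_properties}(2)--(3) and the fact that $\xi^{2^{[n,k]}}=\xi^{-1}$, $a^{2^{[n,k]}}=a$ because $[n,k]/n=\frac{k}{d}$ is odd and $a\in\GF{2^n}$. In the even case the same computation yields only $w_0+w_0^{2^{n-k}}=a^{2^{n-k}}$; raising to the power $2^k$ gives $w_0^{2^k}+w_0=a$ \emph{once one knows} $w_0\in\GF{2^n}$.

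The last ingredient is $w_0^{2^n}+w_0$. From $\xi^{2^n}=\xi^{-1}$ and $b^{2^n}=b$ for $b\in\GF{2^n}$ one computes $\big(\frac{b}{\xi+1}\big)^{2^n}+\frac{b}{\xi+1}=b$, hence by linearity $w_0^{2^n}+w_0=T_k^{[n,k]}(a)=T_d^n(a)$ in the odd case, and $w_0^{2^n}+w_0=T_{n-k}^{[n,n-k]}(a^{2^{n-k}})=T_d^n(a)^{2^{n-k}}$ in the even case, both equalities by Proposition \ref{gcd_lcm}. Consequently, in the odd case $x_0\in\GF{2^n}\iff T_2(T_d^n(a))=0\iff T_d^n(a)\in\GF{2}$, which is the stated criterion and furnishes the ``if'' direction; in the even case $T_d^n(a)=0$ already forces $w_0\in\GF{2^n}$, whence $w_0^{2^k}+w_0=a$ and $x_0=T_2(w_0)\in\GF{2^n}$ as above. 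For the ``only if'' direction, apply $T_d^n$ to $T_k(x)=a$ with $x\in\GF{2^n}$: commutativity, $T_d^n(x)\in\GF{2^d}$ (Corollary \ref{T_image}), and the identity $T_k|_{\GF{2^d}}=\frac{k}{d}\,T_d$ (Lemma \ref{lem_properties}(2)) give $T_d^n(a)=\frac{k}{d}\,T_d(T_d^n(x))=\frac{k}{d}\,T_n(x)$, which lies in $\GF{2}$ when $\frac{k}{d}$ is odd and is $0$ when $\frac{k}{d}$ is even — matching the two criteria.

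When the criterion holds, the solution set is $x_0+\ker(T_k|_{\GF{2^n}})$, and inserting the kernel from Lemma \ref{corT_k=0} yields $x_0+T_2(\GF{2^d})$ (size $2^{d-1}$) in the odd case and $x_0+\GF{2^d}$ (size $2^d$) in the even case, which are the two displayed formulas. The step needing the most care is the even case: $w_0=T_{n-k}^{[n,n-k]}\!\big(\frac{a^{2^{n-k}}}{\xi+1}\big)$ is only known a priori to solve $w^{2^{n-k}}+w=a^{2^{n-k}}$, so it solves $w^{2^k}+w=a$ — and hence $x_0$ solves $T_k(x)=a$ — only after one verifies $w_0\in\GF{2^n}$, which is precisely where the hypothesis $T_d^n(a)=0$ is consumed in the even case rather than the weaker $T_d^n(a)\in\GF{2}$; keeping track of which coset of $\ker(T_k|_{\GF{2^n}})$ is hit is the other bookkeeping point.
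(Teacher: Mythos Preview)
Your proof is correct and follows essentially the same route as the paper: both reduce $T_k(x)=a$ to the already-solved equation $w^{2^k}+w=a$ via $T_k\circ T_2=T_k^{2k}$, take $x_0=T_2(w_0)$ with $w_0$ from formulas \eqref{odd_T_k_2}--\eqref{even_T_k_2}, and add the kernel of $T_k$ on $\GF{2^n}$ from Lemma~\ref{corT_k=0}. The only cosmetic difference is in the necessity direction, where the paper establishes $T_k(\GF{2^n})=\{a:T_d^n(a)\in\GF{2}\}$ (resp.\ $=\{a:T_d^n(a)=0\}$) by a global counting argument on the image, while you obtain the same constraint by applying $T_d^n$ directly to the equation and evaluating $T_k$ on $\GF{2^d}$; your argument is slightly more explicit in the even case, correctly noting that $w_0\in\GF{2^n}$ must be verified (from $T_d^n(a)=0$) before passing from $w_0^{2^{n-k}}+w_0=a^{2^{n-k}}$ to $w_0^{2^k}+w_0=a$.
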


\begin{proof}
Since $T_d^n(\GF{2^n})=\GF{2^d}$ by Corollary \ref{T_image}, one has
\begin{equation}\label{eq}
T_d^n(T_k(\GF{2^n}))=T_k(T_d^n(\GF{2^n}))=T_k(\GF{2^d})=\begin{cases}\GF{2},
&\text{if $\frac{k}{d}$ is odd}\\ 0, &\text{if $\frac{k}{d}$ is
even}.
\end{cases}
\end{equation}
Let us assume $\frac{k}{d}$ is odd. By Corollary \ref{corT_k=0}, we
have $\#T_k(\GF{2^n})=2^{n-d+1}$. Since $\#\{x\in
\overline{\GF{2}}\,|\,T_d^n(x)\in \GF{2}\}=2^{n-d+1}$ as obvious, by
\eqref{eq} we have $\{x\in \overline{\GF{2}}\,|\,T_d^n(x)\in
\GF{2}\}=T_k(\GF{2^n})$, i.e. $T_k(x)=a$ has a solution in
$\GF{2^n}$ if and only if $T_d^n(a)\in \GF{2}$.

At this time, let us assume $\frac{k}{d}$ is even.  Then, by
Corollary \ref{corT_k=0}, $\#T_k(\GF{2^n})=2^{n-d}$.  Since
$\#\{x\in \overline{\GF{2}}\,|\,T_d^n(x)=0\}=2^{n-d}$,  from
\eqref{eq} it follows $\{x\in
\overline{\GF{2}}\,|\,T_d^n(x)=0\}=T_k(\GF{2^n})$, i.e., $T_k(x)=a$
has a solution in $\GF{2^n}$ if and only if $T_d^n(a)=0$.

The assertions about the solution number are consequences of
Corollary \ref{corT_k=0}. The solution formulas were deduced from
\eqref{odd_T_k_2}, \eqref{even_T_k_2} and the fact that if $T_k\circ
T_2(x)=a$, then $y=T_2(x)$ is solution to $T_k(y)=0$. \qed
\end{proof}

\begin{theorem} Let $a\in \GF{2^n}$. Consider the linear equation
\begin{equation}\label{eqT_k_l}
T_l^k(x)=a \text{    (i.e.
$x+x^{2^l}+\cdots+x^{2^{l(\frac{k}{l}-1)}}=a$)}
\end{equation}
\begin{enumerate}
\item Let $\frac{k}{[d,l]}$ be odd. Then,
the equation \eqref{eqT_k_l} has a solution in $\GF{2^n}$ if and
only if
\begin{equation}
T_d^n(a)\in \GF{2^{(d,l)}}.
\end{equation}
 When $T_d^n(a)\in \GF{2^{(d,l)}}$, the equation \eqref{eqT_k_l} has exactly $2^{d-(d,l)}$ solutions in $\GF{2^n}$:
\begin{itemize}
\item If $\frac{k}{d}$ is odd, then for any $\xi\in \mu_{2^{n}+1}\setminus\{1\}$
\begin{equation} \{x\in
\GF{2^n}\,|\,T_l^k(x)=a\}=T_l\circ T_2\circ
T_k^{[n,k]}(\frac{a}{\xi+1})+T_{(d,l)}\circ T_2(\GF{2^d}).
\end{equation}
\item If $\frac{k}{d}$ is even, then for any $\xi\in \mu_{2^{d}+1}\setminus\{1\}$
\begin{equation} \{x\in
\GF{2^n}\,|\,T_l^k(x)=a\}=T_l\circ T_2\circ
T_{n-k}^{[n,n-k]}(\frac{a^{2^{n-k}}}{\xi+1})+T_{(d,l)}\circ
T_2(\frac{T_d^n(a)}{\xi+1})+T_{(d,l)}\circ T_2(\GF{2^d}).
\end{equation}
\end{itemize}
\item Let $\frac{k}{[d,l]}$ be even. Then,
the equation \eqref{eqT_k_l} has a solution in $\GF{2^n}$ if and
only if
\begin{equation}
T_d^n(a)=0.
\end{equation}
 When $T_d^n(a)=0$, the equation \eqref{eqT_k_l} has
exactly $2^d$ solutions in $\GF{2^n}$: for any $\xi\in
\mu_{2^{d}+1}\setminus\{1\}$
\begin{equation} \{x\in
\GF{2^n}\,|\,T_l^k(x)=a\}=T_l\circ T_2\circ
T_{n-k}^{[n,n-k]}(\frac{a^{2^{n-k}}}{\xi+1})+ \GF{2^d}.
\end{equation}

\end{enumerate}
\end{theorem}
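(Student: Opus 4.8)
The plan is to follow the template of Theorem~\ref{Solution_T_k}: first compute the image $T_l^k(\GF{2^n})$ to obtain the solvability criterion, then read off the number of solutions from the kernel description of Lemma~\ref{ker_T_k_l}, and finally exhibit one explicit solution in $\GF{2^n}$ by starting from the closed-field solution of Theorem~\ref{T_kneq0} (and its corollary) and making its $\GF{2^n}$-membership transparent.

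For the image I would first note that $T_l^k$ acts on the subfield $\GF{2^d}$ as a scalar multiple of a trace: since $d\mid k$ and $l\mid k$ we have $[d,l]\mid k$, and then the index-counting of Proposition~\ref{gcd_lcm} (already used inside Lemma~\ref{ker_T_k_l}) gives $T_l^k(b)=\tfrac{k}{[d,l]}\,T_l^{[d,l]}(b)=\tfrac{k}{[d,l]}\,T_{(d,l)}^d(b)$ for $b\in\GF{2^d}$. Hence, using the commutativity of Lemma~\ref{lem_properties}, $T_d^n(\GF{2^n})=\GF{2^d}$ and $T_{(d,l)}^d(\GF{2^d})=\GF{2^{(d,l)}}$ (both from Corollary~\ref{T_image}),
\[
T_d^n\bigl(T_l^k(\GF{2^n})\bigr)=T_l^k\bigl(T_d^n(\GF{2^n})\bigr)=T_l^k(\GF{2^d})=\begin{cases}\GF{2^{(d,l)}},&\tfrac{k}{[d,l]}\text{ odd},\\ \{0\},&\tfrac{k}{[d,l]}\text{ even},\end{cases}
\]
which gives the ``only if'' direction of the criterion; the ``if'' direction follows once a solution is produced. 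The solution count is then immediate: the solution set is a coset of $\{x\in\GF{2^n}\mid T_l^k(x)=0\}$, of size $2^{d-(d,l)}$ when $\tfrac{k}{[d,l]}$ is odd and $2^d$ when it is even, by Lemma~\ref{ker_T_k_l}; moreover in those two cases that kernel is $T_{(d,l)}\circ T_2(\GF{2^d})$, respectively $\GF{2^d}$, which is precisely the additive term in each claimed formula.

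For the explicit solution, put $F(\beta):=T_l\circ T_2(\beta)$. Transitivity ($T_l^k\circ T_l=T_k$) and Lemma~\ref{lem_properties}(3) yield the universal identities $T_l^k(F(\beta))=\beta+\beta^{2^k}$ and $F(\beta)+F(\beta)^{2^n}=(\beta+\beta^{2^n})+(\beta+\beta^{2^n})^{2^l}$, so $F(\beta)\in\GF{2^n}$ iff $\beta+\beta^{2^n}\in\GF{2^l}$. If $\tfrac kd$ is odd (which forces $\tfrac{k}{[d,l]}$ odd), take $\xi\in\mu_{2^n+1}\setminus\{1\}$ and $\beta=T_k^{[n,k]}(\tfrac a{\xi+1})$: the usual manipulation with $\xi^{2^{[n,k]}}=\xi^{-1}$ (valid since $[n,k]/n=k/d$ is odd) gives $\beta+\beta^{2^k}=a$, while $\beta+\beta^{2^n}=T_k^{[n,k]}(a)=T_d^n(a)$ by Proposition~\ref{gcd_lcm}; thus $F(\beta)$ solves the equation and lies in $\GF{2^n}$ exactly when $T_d^n(a)\in\GF{2^{(d,l)}}$, and adding the kernel yields the stated set. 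If $\tfrac kd$ is even then $\tfrac nd$ and $\tfrac{n-k}d$ are both odd and $(d,l)\mid n-k$ (because $(d,l)\mid d\mid n$ and $(d,l)\mid l\mid k$); take $\xi\in\mu_{2^d+1}\setminus\{1\}$, so $\xi^{2^d}=\xi^{2^n}=\xi^{-1}$, and set $u=T_{n-k}^{[n,n-k]}(\tfrac{a^{2^{n-k}}}{\xi+1})$ and $c=T_{(d,l)}\circ T_2(\tfrac{T_d^n(a)}{\xi+1})$. A short computation gives $u+u^{2^{n-k}}=a^{2^{n-k}}$, hence $u^{2^k}+u^{2^n}=a$, and $u+u^{2^n}=T_{n-k}^{[n,n-k]}(a^{2^{n-k}})=(T_d^n(a))^{2^{n-k}}=T_d^n(a)$ (Proposition~\ref{gcd_lcm} and $(d,l)\mid n-k$), so $T_l^k(F(u))=a+T_d^n(a)$ and $F(u)+F(u)^{2^n}=T_l\circ T_2(T_d^n(a))=0$ once $T_d^n(a)\in\GF{2^{(d,l)}}$. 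Writing $a'=T_d^n(a)\in\GF{2^{(d,l)}}$, one finds $c+c^{2^d}=T_{(d,l)}\circ T_2(a')=0$, so $c\in\GF{2^d}$, and $T_{(d,l)}^d(c)=T_d\circ T_2(\tfrac{a'}{\xi+1})=a'$, whence $T_l^k(c)=\tfrac{k}{[d,l]}T_{(d,l)}^d(c)=a'$ (the coefficient being odd). Therefore $P:=F(u)+c$ lies in $\GF{2^n}$ with $T_l^k(P)=a+a'+a'=a$, and $P$ together with the kernel $T_{(d,l)}\circ T_2(\GF{2^d})$ is the asserted solution set. The case $\tfrac{k}{[d,l]}$ even is the identical computation, except that now $T_d^n(a)=0$ makes $c$ and the defect $(T_d^n(a))^{2^{n-k}}$ vanish and the kernel is $\GF{2^d}$.

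The step I expect to be the main obstacle is the subcase $\tfrac{k}{[d,l]}$ odd, $\tfrac kd$ even: one cannot simply substitute $n-k$ for $k$ as in Theorem~\ref{Solution_T_k}, because $l$ need not divide $n-k$, so the term $F(u)$ built from $T_{n-k}^{[n,n-k]}$ only solves $T_l^k(x)=a$ up to the defect $T_d^n(a)$. Repairing it requires guessing the corrector $c$ and checking the two facts that make it work: that $c$ falls inside $\GF{2^d}$ (so that $T_l^k$ there reduces to a scalar times $T_{(d,l)}^d$) and that $(d,l)\mid n-k$ (so that $(T_d^n(a))^{2^{n-k}}=T_d^n(a)$ and the two defects cancel). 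Everything else is routine bookkeeping with the commuting relations of Lemma~\ref{lem_properties} and the identities $\xi^{2^{jm}}=\xi^{(-1)^j}$ for $\xi\in\mu_{2^m+1}$.
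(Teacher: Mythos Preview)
Your proposal is correct and follows essentially the same approach as the paper: compute $T_d^n(T_l^k(\GF{2^n}))$ via commutativity and Proposition~\ref{gcd_lcm}, read off the kernel from Lemma~\ref{ker_T_k_l}, and in the delicate subcase $\tfrac{k}{[d,l]}$ odd with $\tfrac{k}{d}$ even build the solution as $F(u)+c$ with the same $u$ and the same corrector $c=T_{(d,l)}\circ T_2\bigl(\tfrac{T_d^n(a)}{\xi+1}\bigr)$ that the paper uses. The only organizational difference is that the paper obtains both directions of the solvability criterion at once by a cardinality match between $T_l^k(\GF{2^n})$ and the fiber set of $T_d^n$, whereas you get ``only if'' from the image computation and ``if'' by exhibiting the solution; since the explicit solution must be produced anyway, this is the same proof with the steps reordered, and your write-up in fact supplies the verifications (e.g.\ $(d,l)\mid n-k$, $c\in\GF{2^d}$, $T_l^k(c)=T_d^n(a)$) that the paper leaves as ``easy exercises''.
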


\begin{proof}
It holds
\begin{align*}
T_d^n(T_l^k(\GF{2^n}))&=T_l^k(T_d^n(\GF{2^n}))\\&=T_l^k(\GF{2^d})=\frac{k}{[d,l]}T_l^{[d,l]}(\GF{2^d})\\&=\frac{k}{[d,l]}T_{(d,l)}^{d}(\GF{2^d}) \text{  by Proposition \ref{gcd_lcm}}\\
&=\begin{cases} 0, &\text{if $\frac{k}{[d,l]}$ is even,}\\
\GF{2^{(d,l)}}, &\text{if $\frac{k}{[d,l]}$ is odd,}
\end{cases}
\end{align*}
and on the other hand, Corollary \ref{ker_T_k_l} let us know
\begin{align*}
\#T_l^k(\GF{2^n})=\begin{cases} 2^{n-d}, &\text{if $\frac{k}{[d,l]}$ is even,}\\
2^{n-d+(d,l)}, &\text{if $\frac{k}{[d,l]}$ is odd.}
\end{cases}
\end{align*}
Since $T_d^n(x)=a$ has $2^{n-d}$ solutions in the closed field
(indeed in $\GF{2^n}$), thus we conclude
\begin{align*}
T_l^k(\GF{2^n})=\begin{cases} \{a\in\GF{2^n}\,|\,T_d^n(a)=0\}, &\text{if $\frac{k}{[d,l]}$ is even,}\\
\{a\in\GF{2^n}\,|\,T_d^n(a)\in \GF{2^{(d,l)}}\}, &\text{if
$\frac{k}{[d,l]}$ is odd,}
\end{cases}
\end{align*}
which is just the sufficient and necessary condition for existence
of solution in $\GF{2^n}$.

When  $\frac{k}{[d,l]}$ be odd and $\frac{k}{d}$ is even, the
solution formula can be checked as follows: Consider $T_d^n(a)\in
\GF{2^d}$. First, it can be checked that $z_0=T_{(d,l)}\circ
T_2(\frac{T_d^n(a)}{\xi+1})$ is a solution in $\GF{2^d}$ of
$T_l^k(z)=T_d^n(a)$, i.e. $T_{[d,l]}^l(z)=T_d^n(a)$ i.e.
$T_{(d,l)}^d(z)=T_d^n(a)$. For $y_0=T_l\circ T_2\circ
T_{n-k}^{[n,n-k]}(\frac{a^{2^{n-k}}}{\xi+1})\in \GF{2^n}$, it is an
easy exercise to check by direct calculation
$T_l^k(y_0)=T_d^n(a)+a$. So $x_0=y_0+z_0$ is a solution in
$\GF{2^n}$.

In remained cases, the solution formulas are deduced from theorem
\ref{Solution_T_k}, regarding the fact that $T_l(x_0)$ is solution
in $\GF{2^n}$ of $T_l^k(x)=a$ if $x_0\in \GF{2^n}$ is solution of
$T_k(x)=a$. \qed
\end{proof}

As an immediate consequence of these facts, one has (confirms):
\begin{corollary} Followings are true.
\begin{itemize}
\item $T_l^k(x)$ is a 2-to-1 mapping on $\GF{2^n}$ if and only if $d=1$ and $\frac{k}{l}$ is even, or, $d=2$ and both $l$ and $\frac{k}{2l}$ are odd.
\item $T_l^k(x)$ is a permutation on $\GF{2^n}$ if and only if $\frac{k}{l}$ is odd and $d|l$.

Hence, when $\frac{k}{l}$  is odd, $T_l^k(x)$ is an exceptional
polynomial over $\GF{2}$.
\end{itemize}
\end{corollary}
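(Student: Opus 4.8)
The plan is to reduce both parts to computing the order of the kernel of $T_l^k$ acting on $\GF{2^n}$. First I would record that $T_l^k$ maps $\GF{2^n}$ into itself (each summand $x\mapsto x^{2^{li}}$ is an iterate of Frobenius) and is additive, so its restriction to $\GF{2^n}$ is a $\GF{2}$-linear endomorphism of the vector space $\GF{2^n}$. Hence every non-empty fibre of this map is a coset of $K:=\{x\in\GF{2^n}\,|\,T_l^k(x)=0\}$, so $T_l^k$ is a permutation of $\GF{2^n}$ exactly when $\#K=1$, and it is $2$-to-$1$ onto its image exactly when $\#K=2$. By Lemma~\ref{ker_T_k_l}, $\#K=2^{d-(d,l)}$ when $\frac{k}{[d,l]}$ is odd, and $\#K=2^{d}$ when $\frac{k}{[d,l]}$ is even (the odd-case count being $2^d/2^{(d,l)}$ since $T_{(d,l)}^{d}\colon\GF{2^d}\to\GF{2^{(d,l)}}$ is onto by Corollary~\ref{T_image}).

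For the permutation statement: $\#K=2^{d}\ge 2$ always holds in the even case, so we must be in the odd case, where $\#K=1\iff d=(d,l)\iff d\mid l$; conversely, if $d\mid l$ then $[d,l]=l$, so ``$\frac{k}{[d,l]}$ odd'' reads ``$\frac{k}{l}$ odd'' and then $\#K=2^{0}=1$. This is precisely the criterion ``$\frac{k}{l}$ odd and $d\mid l$'', and exceptionality then follows immediately: if $\frac{k}{l}$ is odd, then for every $n$ coprime to $k$ we have $d=1\mid l$, so $T_l^k$ permutes $\GF{2^n}$; there being infinitely many such $n$ (say, the primes not dividing $k$), $T_l^k$ is a permutation polynomial of infinitely many extensions of $\GF{2}$, hence exceptional over $\GF{2}$.

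For the $2$-to-$1$ statement I would split on the parity of $\frac{k}{[d,l]}$. In the even case $\#K=2^{d}=2$ forces $d=1$; then $[1,l]=l$ and the even hypothesis becomes ``$\frac{k}{l}$ even'', the first listed alternative. In the odd case $\#K=2^{d-(d,l)}=2$ forces $d-(d,l)=1$, i.e.\ $d=(d,l)+1$; since $(d,l)\mid d$ this yields $(d,l)\mid\bigl((d,l)+1\bigr)$, hence $(d,l)=1$ and $d=2$, i.e.\ $l$ is odd; then $[d,l]=2l$ and the odd hypothesis becomes ``$\frac{k}{2l}$ odd'' (here $2l\mid k$ indeed, since $l\mid k$, $2=d\mid k$ and $\gcd(2,l)=1$), the second alternative. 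Conversely, each of the two alternatives plainly lands in the corresponding case with $\#K=2$, completing the equivalence.

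I do not expect a genuine obstacle: the whole argument is driven by Lemma~\ref{ker_T_k_l}. The only points needing a little care are the elementary divisibility step showing $d-(d,l)=1$ forces $d=2$, and the bookkeeping that translates the parity of $\frac{k}{[d,l]}$ into the stated conditions on $\frac{k}{l}$ and $\frac{k}{2l}$ according to whether $d\mid l$ or $l$ is odd --- not conceptual, just careful tracking of which divisors of $k$ appear in the denominators.
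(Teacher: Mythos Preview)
Your proposal is correct and is exactly the argument the paper has in mind: the paper offers no explicit proof here, simply labelling the corollary ``an immediate consequence of these facts'', and your derivation from the kernel sizes of Lemma~\ref{ker_T_k_l} (together with Corollary~\ref{T_image} for the odd-case count $2^{d-(d,l)}$) is precisely that immediate consequence spelled out. The divisibility step forcing $(d,l)=1$ and $d=2$ in the odd $2$-to-$1$ case, and the translation of the parity of $k/[d,l]$ into the stated conditions, are handled correctly.
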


\end{document}